\newtheorem{pro}{Proposition}
\theoremstyle{remark}
\newtheorem{rem}{Remark}
\numberwithin{equation}{section}
\begin{document}
\hoffset = -2.4truecm \voffset = -2truecm
\renewcommand{\baselinestretch}{1.2}
\newcommand{\mb}{\makebox[10cm]{}\\ }
\date{}
%%%%% DOCUMENT SPECIFIC DEFINITIONS

%  Theorems, Lemmas and the like, should be typeset in italic
\newtheorem{theorem}{Theorem}
\newtheorem{proposition}{Proposition}
\newtheorem{lemma}{Lemma}
\newtheorem{definition}{Definition}

%%%%% END DOCUMENT SPECIFIC DEFINITIONS
%\renewcommand{\square}{\hfill$\Box$\vspace{2ex}}
%\renewcommand{\Theta}{\Ta}

\title{An insight into the $q$-difference two-dimensional Toda lattice equation, $q$-difference sine-Gordon equation and their integrability }
\author{C.X. Li$^{1,\dagger}$, H.Y. Wang$^{2}$, Y.Q. Yao$^3$ and S.F. Shen$^4$\\
$^1$School of Mathematical Sciences,
Capital Normal University,
    Beijing 100048, China\\
$^2$School of Mathematics, 
Renmin University of China,
Beijing 100872, China\\
$^3$College of Science, China Agricultural University, 
Beijing 100083, China\\
$^4$Department of Applied Mathematics, Zhejiang University of Technology,
Hangzhou 310023, China}
\date{}
 \maketitle
\begin{abstract}
In our previous work \cite{LNS}, we constructed quasi-Casoratian solutions to the noncommutative $q$-difference two-dimensional Toda lattice ($q$-2DTL) equation by Darboux transformation, which we can prove produces the existing Casoratian solutions to the bilinear $q$-2DTL equation obtained by Hirota's bilinear method in commutative setting. It is actually true that one can not only construct solutions to soliton equations but also solutions to their corresponding B$\ddot{a}$cklund transformations by their Darboux transformations and binary Darboux transformations. To be more specific, eigenfunctions produced by iterating Darboux transformations and binary Darboux transformations for soliton equations give nothing but determinant solutions to their B$\ddot{a}$cklund transformations, individually. This reveals the profound connections between Darboux transformations and Hirota's bilinear method. In this paper, we shall expound this viewpoint in the case of the $q$-2DTL equation. First, we derive a generalized bilinear B$\ddot{a}$cklund transformation and thus a generalized Lax pair for the bilinear $q$-2DTL equation. And then we successfully construct the binary Darboux transformation for the $q$-2DTL equation, based on which, Grammian solutions expressed in terms of quantum integrals are established for both the bilinear $q$-2DTL equation and its bilinear B$\ddot{a}$cklund transformation. In the end, by imposing the 2-periodic reductions on the corresponding results of the $q$-2DTL equation, we derive a $q$-difference sine-Gordon equation, a modified $q$-difference sine-Gordon equation and obtain their corresponding solutions.
\end{abstract}

\section{Introduction}
With the discovery of quantum groups \cite{DVG,JM}, the studies of integrable systems have entered a new phase. Some kinds of $q$-special functions naturally appear in the representation theory of quantum groups, where the $q$-difference operator plays a role similar to the differential operator in the theory of special functions \cite{EH,KTH,MMNN}. As is well known, soliton equations have a close relationship with special functions\cite{AN,KO}. Therefore, the study of $q$-difference integrable systems arises as an interesting subject. 

As two powerful tools, both Hirota's bilinear method and Darboux transformation have been extensively used to study continuous or discrete integrable systems, particularly, their solutions. On one hand, one can derive solutions to soliton equations and their corresponding B$\ddot{a}$cklund transformations by Hirota's bilinear method \cite{HR1,HR2}. On the other hand, it is well known that solutions to soliton equations can be constructed by their Darboux transformations and binary Darboux transformations \cite{DT1,DT2,DT3}. In addition, it is remarkable that Darboux transformations and binary Darboux transformations for soliton equations can produce solutions to their corresponding B$\ddot{a}$cklund transformations too. Specifically speaking, eigenfunctions produced by iterating Darboux transformations and binary Darboux transformations for soliton equations provide solutions to their B$\ddot{a}$cklund transformations, individually. These results obtained by the two methods are amazingly consistent with each other, which reveals the profound connections between Hirota's bilinear method and Darboux transformation. This viewpoint could be observed in \cite{NW} where the Grammian solutions to the bilinear two-dimensional Toda lattice equation and its bilinear B$\ddot{a}$cklund transformations were constructed by the binary Darboux transformation. We shall expound this viewpoint and apply it to $q$-difference integrable systems explicitly in this paper.  

$q$-Difference integrable systems can be regarded as a kind of discretized systems which reduce to classical integrable systems under the continuum limit $q\rightarrow 1$ \cite{VKC}. The study of $q$-analogues of classical integrable systems in parallel with classical integrable systems has attracted much attention\cite{PNGR,HS,TSL,TL,HLC,KOS}. In this paper, we will focus on the $q$-2DTL equation which is a $q$-difference version of the two-dimensional Toda lattice ($2$DTL) equation. In \cite{KOS}, a bilinear $q$-2DTL equation together with its Casoratian solutions were first proposed. As the reduction of the bilinear $q$-2DTL equation, a $q$-cylindrical Toda lattice equation was derived which admmited Casorati determinant solutions with entries given by the $q$-Bessel functions. Recently, we presented a slightly different nonlinear $q$-2DTL equation whose bilinear form is the same as the one obtained in \cite{KOS}. We constructed its bilinear B$\ddot{a}$cklund transformations and Lax pair by using Hirota's bilinear method. Besides, Darboux transformation was established to construct its quasi-Casoratian solutions for a noncommutative $q$-2DTL equation\cite{LNS}. Actually, we can prove that the same Darboux transformation holds true for the commutative $q$-2DTL equation which can be used to reconstruct the existing Casoratian solutions to the bilinear $q$-2DTL equation and its bilinear B$\ddot{a}$cklund transformation. Furthermore, matrix integral solutions to the $q$-2DTL equation and its Pfaffianized system were presented in \cite {LQS}. However, how to construct the binary Darboux transformation for the $q$-2DTL equation and further derive its Grammian solutions still remain a challenging problem. To go a step further, periodic reductions on the results of the $q$-2DTL equation are of great interest as well. We will tell the whole story in this paper. 

The paper is organized as follows. We give a brief review of some known results on the $q$-2DTL equation and present its generalized bilinear B$\ddot{a}$cklund transformation and Lax pair in Section 2. In Section 3, we will explain how to derive existing Casorati solutions to the bilinear B$\ddot{a}$cklund transformation for the q-$2$DTL equation from the Darboux transformation. In Section 4, the binary Darboux transformation is successfully constructed for the $q$-2DTL equation, based on which Grammian solutions expressed in terms of quantum integrals are established for both the $q$-2DTL equation and its B$\ddot{a}$cklund transformations. Section 5 is devoted to the $2$-periodic reductions of the corresponding results of the $q$-2DTL equation. As a result, the $q$-difference sine-Gordon equation and its modified system are proposed. Meanwhile, their solutions are presented, respectively. Concluding remarks are given in Section 6.

\section{A generalized Lax pair for the $q$-2DTL equation}
The nonlinear q-$2$DTL equation was first proposed in \cite{KOS}.  Later on, a slightly different nonlinear $q$-2DTL equation was presented in \cite{LNS}. These two nonlinear equations correspond to the same bilinear equation under different dependent variable transformations. In this paper, we will adopt the nonlinear $q$-2DTL equation appearing in the latter paper. 

The $q$-2DTL equation considered reads as
\begin{eqnarray}
	&&D_{q^ \alpha, x}V_n(x,y)=J_{n+1}(x,q^ \beta y)V_n(x,y)-V_n(q^ \alpha x,y)J_n(x,y),\label{E1}\\
	&&D_{q^ \beta, y}J_n(x,y)=V_n(x,y)-V_{n-1}(q^ \alpha x,y),\label{E2}
\end{eqnarray}
where the $q$-difference operator (or the Jackson derivative) is defined by
\begin{eqnarray*}
	&&D_{q^ \alpha, x}(f(x,y))=\frac{\sigma_1(f(x,y))-f(x,y)}{(q-1)x},\\
	&&D_{q^ \beta, y}(f(x,y))=\frac{\sigma_2(f(x,y))-f(x,y)}{(q-1)y}
\end{eqnarray*}
with the $q$-shift operator given by $\sigma_1 (f(x,y))=f(q^ \alpha x,y)$ and $\sigma_2(f(x,y))=f(x,q^\beta y)$. For the sake of simplicity, we denote $D_1=D_{q^\alpha,x}$, $D_2=D_{q^\beta,y}$ and $f_n=f_n(x,y)$ for any function $f_n(x,y)$ without any ambiguity from now on.  

By introducing $V_n=\sigma_2(X_{n+1})X_n^{-1}$ and $J_n=D_1(X_n)X_n^{-1}$, eqs. \eqref{E1}$\sim$\eqref{E2} can be reduced into a single equation 
\begin{equation}
	D_2(D_1(X_n)X_n^{-1})=\sigma_2(X_{n+1})X_n^{-1}-\sigma_1(\sigma_2(X_n)X_{n-1}^{-1}). \label{cbQT}
\end{equation}
By assuming $X_n=\tau_n\tau_{n-1}^{-1}$, \eqref{cbQT} can be further transformed into the bilinear $q$-2DTL equation \cite{KOS}
	\begin{equation}
		D_1(D_2(\tau_n))\tau_n-D_1(\tau_n)D_2(\tau_n)=\sigma_2(\tau_{n+1})\sigma_1(\tau_{n-1})-\sigma_2(\sigma_1(\tau_n))\tau_n \label{qTL}.
	\end{equation}

Denote $a(x)=(q-1)x$ and $b(y)=(q-1)y$. In \cite{LNS}, a special bilinear B$\ddot{a}$cklund transformation was presented for the $q$-2DTL equation \eqref{qTL}. It was given by 
\begin{eqnarray}
	&&D_1(\tau_n)\tau'_{n-1}-D_1(\tau'_{n-1})\tau_n=\sigma_1(\tau_{n-1})\tau'_n,\label{kBT1}\\
	&&D_2(\tau_n)\tau'_n-D_2(\tau'_n)\tau_n=- \sigma_2(\tau_{n+1})\tau'_{n-1}, \label{kBT2}
\end{eqnarray}
which, by setting $\phi_{n+1}=\tau_n'/\tau_n$, led to the following special Lax pair 
\begin{eqnarray}
	&&D_1\phi_n(x,y)=-\phi_{n+1}(x,y)+J_n(x,y)\phi_n(x,y),\label{kLP1}\\
	&&D_2\phi_n(x,y)=V_{n-1}(x,y)\phi_{n-1}(x,y),\label{kLP2}
\end{eqnarray}	
where 
\begin{align*}
	V_n=\frac{\tau_{n-1}\sigma_2(\tau_{n+1})}{\tau_n\sigma_2(\tau_n)}=\sigma_2(X_{n+1})X_n^{-1},\ J_n=\frac{1}{a(x)}\left(\frac{\tau_{n-1}\sigma_1(\tau_n)}{\tau_n\sigma_1(\tau_{n-1})}-1\right)=D_1(X_n)X_n^{-1}.
\end{align*}

Actually, concerning \eqref{qTL}, we manage to obtain a more general bilinear B$\ddot{a}$cklund transformation and thus a generalized Lax pair which can be stated as 
\begin{pro}
The bilinear $q$-2DTL equation \eqref{qTL} has the bilinear B$\ddot{a}$cklund transformation
\begin{align}
	D_1(\tau_n)\tau'_{n-1}-D_1(\tau'_{n-1})\tau_n&=-\lambda^{-1}\sigma_1(\tau_{n-1})\tau'_n+\nu \tau_n\sigma_1(\tau'_{n-1}),\label{bBT1}\\
	D_2(\tau_n)\tau'_n-D_2(\tau'_n)\tau_n&=\lambda \sigma_2(\tau_{n+1})\tau'_{n-1}-\mu \sigma_2(\tau'_n)\tau_n,\label{bBT2}
\end{align}
where $\lambda$, $\nu$ and $\mu$ are arbitrary constants.
\end{pro}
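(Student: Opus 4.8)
The plan is to verify \eqref{bBT1}--\eqref{bBT2} directly by Hirota's bilinear calculus adapted to the Jackson derivative. Assume $\{\tau_n\}$ is a solution of \eqref{qTL} and let $\{\tau'_n\}$ be obtained from it through \eqref{bBT1}--\eqref{bBT2} for fixed constants $\lambda,\nu,\mu$; the goal is to show that $\{\tau'_n\}$ again solves \eqref{qTL}. Introduce the bilinear expression
$$\mathcal{P}[f_n]:=D_1(D_2(f_n))f_n-D_1(f_n)D_2(f_n)-\sigma_2(f_{n+1})\sigma_1(f_{n-1})+\sigma_2(\sigma_1(f_n))f_n ,$$
so that \eqref{qTL} reads $\mathcal{P}[\tau_n]=0$ and the assertion is $\mathcal{P}[\tau'_n]=0$. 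The standard device is to exhibit a quadratic weight $W_n[\,\cdot\,]$ (a product of two shifted $\tau$'s of total index $2n$, whose precise shift pattern is fixed during the computation) for which the combination
$$\Xi_n:=\mathcal{P}[\tau'_n]\,W_n[\tau]-\mathcal{P}[\tau_n]\,W_n[\tau']$$
vanishes identically \emph{as a consequence of} \eqref{bBT1}--\eqref{bBT2} alone. Since $\mathcal{P}[\tau_n]=0$ and $W_n[\tau]\not\equiv 0$, this forces $\mathcal{P}[\tau'_n]=0$, which is the proposition; the consistency (i.e.\ solvability) of \eqref{bBT1}--\eqref{bBT2} as an overdetermined system for $\{\tau'_n\}$ is checked the same way, or read off from the Lax pair mentioned below.

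To carry this out one first records the elementary calculus of the operators involved: the $q$-Leibniz rules $D_i(fg)=D_i(f)\sigma_i(g)+fD_i(g)=\sigma_i(f)D_i(g)+D_i(f)g$, the multiplicativity $\sigma_i(fg)=\sigma_i(f)\sigma_i(g)$, the substitutions $\sigma_1(f)=f+a(x)D_1(f)$ and $\sigma_2(f)=f+b(y)D_2(f)$, and the commutations $D_1D_2=D_2D_1$, $\sigma_1\sigma_2=\sigma_2\sigma_1$, $\sigma_1D_2=D_2\sigma_1$, $\sigma_2D_1=D_1\sigma_2$. From these one derives the $q$-analogues of the Hirota exchange formulae needed below: a ``one-$D_1$'' identity of the schematic form $D_1(ab)\,cd-ab\,D_1(cd)=\big(\text{sum of single-}D_1\ \text{bilinear terms}\big)$, its ``$D_1D_2$'' counterpart obtained by applying $D_2$ and re-sorting, and the purely algebraic identities that reorganize the index shifts $n\mapsto n\pm1$ coming from the lattice part of \eqref{qTL}.

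The execution then consists of expanding $\Xi_n$ by the $q$-Leibniz rules into a sum of quartic monomials in the $\tau$'s and $\tau'$'s, each carrying its own $\sigma_1,\sigma_2$ dressing, and reorganizing this sum by means of the exchange formulae so that it collapses into a finite sum of products of a bilinear factor with one of the quantities
$$D_1(\tau_m)\tau'_{m-1}-D_1(\tau'_{m-1})\tau_m+\lambda^{-1}\sigma_1(\tau_{m-1})\tau'_m-\nu\,\tau_m\sigma_1(\tau'_{m-1})$$
or
$$D_2(\tau_m)\tau'_m-D_2(\tau'_m)\tau_m-\lambda\,\sigma_2(\tau_{m+1})\tau'_{m-1}+\mu\,\sigma_2(\tau'_m)\tau_m$$
taken at $m=n$ or a neighbouring index, each of which is zero by \eqref{bBT1}--\eqref{bBT2}. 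Hence $\Xi_n\equiv 0$ and the proposition follows. As independent checks one notes that $\lambda=-1$, $\nu=\mu=0$ recovers the special transformation \eqref{kBT1}--\eqref{kBT2}; that the substitution $\phi_{n+1}=\tau'_n/\tau_n$ turns \eqref{bBT1}--\eqref{bBT2} into a one-parameter Lax pair whose compatibility condition is exactly \eqref{E1}--\eqref{E2}; and that the index gauge $\tau'_n\mapsto c^n\tau'_n$, which preserves \eqref{qTL}, merely rescales $\lambda\mapsto\lambda/c$, so that only $\nu$ and $\mu$ are genuinely new parameters.

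The main obstacle is the bookkeeping peculiar to the $q$-difference setting. Because $D_1$ distributes over a product by placing a single $q$-shift $\sigma_1$ on one of the two factors (and $D_2$ likewise with $\sigma_2$), every monomial produced in the expansion of $\Xi_n$ carries a definite, and different, pattern of shifts, all of which must be tracked exactly; there is no continuum Leibniz rule to lean on. Moreover the right-hand sides of \eqref{bBT1}--\eqref{bBT2} contain the extra ``source'' monomials $\nu\,\tau_n\sigma_1(\tau'_{n-1})$ and $\mu\,\sigma_2(\tau'_n)\tau_n$, absent from \eqref{kBT1}--\eqref{kBT2}, which generate additional terms in $\Xi_n$ whose cancellation is precisely what pins down that $\lambda,\nu,\mu$ may be chosen arbitrarily. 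Identifying the correct weight $W_n[\,\cdot\,]$ and the exact list of $q$-exchange identities so that everything telescopes to zero is where the real work lies; the remainder is routine application of the $q$-Leibniz rule.
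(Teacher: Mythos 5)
Your overall strategy coincides with the paper's: form a quadratic-weighted cross-difference of the bilinear form evaluated at $\tau_n$ and at $\tau'_n$, and show that it vanishes identically once \eqref{bBT1}--\eqref{bBT2} are imposed, so that $\mathcal{P}[\tau_n]=0$ forces $\mathcal{P}[\tau'_n]=0$. The paper's concrete choices are: first eliminate the Jackson derivatives entirely, rewriting \eqref{qTL} in the purely multiplicative form \eqref{ef1},
\begin{align*}
(1+a(x)b(y))\,\sigma_1\sigma_2(\tau_n)\,\tau_n-\sigma_1(\tau_n)\,\sigma_2(\tau_n)-a(x)b(y)\,\sigma_2(\tau_{n+1})\,\sigma_1(\tau_{n-1})=0,
\end{align*}
and then take the weight $W_n[f]=\sigma_1\sigma_2(f_n)\,f_n$. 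Working with \eqref{ef1} rather than with $\mathcal{P}$ itself is what makes the bookkeeping finite: every term of the cross-difference is a quartic monomial in shifted $\tau$'s and $\tau'$'s with no derivatives left, the $\sigma_1\sigma_2(\tau_n)\tau_n\,\sigma_1\sigma_2(\tau'_n)\tau'_n$ terms cancel between the two halves automatically, and the BT relations enter in their multiplicative guise $\sigma_1(\tau_n)\tau'_{n-1}=(1+a(x)\nu)\tau_n\sigma_1(\tau'_{n-1})-a(x)\lambda^{-1}\sigma_1(\tau_{n-1})\tau'_n$ and $\sigma_2(\tau_n)\tau'_n=(1-b(y)\mu)\sigma_2(\tau'_n)\tau_n+b(y)\lambda\,\sigma_2(\tau_{n+1})\tau'_{n-1}$, each used twice (once as written and once under an extra $\sigma_1$ or $\sigma_2$).

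That said, there is a genuine gap: your proposal stops exactly where the proof begins. The existence of a weight $W_n$ and of a telescoping scheme that reduces $\Xi_n$ to multiples of the BT relations is precisely the assertion to be established --- for a generic choice of right-hand sides in \eqref{bBT1}--\eqref{bBT2} no such collapse occurs, and it is this cancellation alone that certifies that the particular coefficient pattern $(-\lambda^{-1},\nu,\lambda,-\mu)$ is admissible with $\lambda,\nu,\mu$ arbitrary. By deferring ``identifying the correct weight and the exact list of $q$-exchange identities'' you have described the shape of a proof rather than given one; your supplementary checks (recovery of \eqref{kBT1}--\eqref{kBT2} at $\lambda=-1$, $\nu=\mu=0$; the gauge $\tau'_n\mapsto c^n\tau'_n$ absorbing $\lambda$; compatibility of the induced Lax pair) are consistent with the claim but do not substitute for the verification, since they would equally pass for many non-B\"acklund deformations. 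To close the gap, carry out the chain of substitutions indicated above for $P=[\,\eqref{ef1}\text{ at }\tau\,]\,\sigma_1\sigma_2(\tau'_n)\tau'_n-[\,\eqref{ef1}\text{ at }\tau'\,]\,\sigma_1\sigma_2(\tau_n)\tau_n$ and check that the $\nu$- and $\mu$-dependent terms produced at the intermediate stages cancel in pairs, which is where the arbitrariness of those two parameters is actually used.
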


\begin{proof}
Suppose that $\tau_n$ is a solution of the bilinear $q$-2DTL equation \eqref{qTL}. If we can show that $\tau'_n$ is also a solution of \eqref{qTL}, then \eqref{bBT1} and \eqref{bBT2} form a B$\ddot{a}$cklund transformation. 

It is obvious that \eqref{qTL} has the equivalent form 
\begin{align}
	(1+a(x)b(y))\sigma_1\sigma_2(\tau_n)\tau_n-\sigma_1(\tau_n)\sigma_2(\tau_n)-a(x)b(y)\sigma_2(\tau_{n+1})\sigma_1(\tau_{n-1})=0.\label{ef1}
\end{align}
By virtue of \eqref{bBT1} and \eqref{bBT2}, we can prove
\begin{align*}
	P&\equiv \left[(1+a(x)b(y))\sigma_1\sigma_2(\tau_n)\tau_n-\sigma_1(\tau_n)\sigma_2(\tau_n)-a(x)b(y)\sigma_2(\tau_{n+1})\sigma_1(\tau_{n-1})\right]\sigma_1\sigma_2(\tau'_n)\tau'_n\\
	&\quad-[(1+a(x)b(y))\sigma_1\sigma_2(\tau'_n)\tau'_n-\sigma_1(\tau'_n)\sigma_2(\tau'_n)-a(x)b(y)\sigma_2(\tau'_{n+1})\sigma_1(\tau'_{n-1})]\sigma_1\sigma_2(\tau_n)\tau_n\\
	&=a(x)b(y)[\sigma_2(\tau'_{n+1})\sigma_1(\tau'_{n-1})\sigma_1\sigma_2(\tau_n)\tau_n-\sigma_2(\tau_{n+1})\sigma_1(\tau_{n-1})\sigma_1\sigma_2(\tau'_n)\tau'_n]\\
	&\quad +\sigma_1(\tau'_n)\sigma_2(\tau'_n)\sigma_1\sigma_2(\tau_n)\tau_n-\sigma_1(\tau_n)\sigma_2(\tau_n)\sigma_1\sigma_2(\tau'_n)\tau'_n\\
	&=b(y)[\sigma_1(\tau'_{n-1})\tau_n(\lambda(1+a(x)\nu)\sigma_2(\tau_{n+1})\sigma_2\sigma_1(\tau'_{n})-\lambda\sigma_2\sigma_1(\tau_{n+1})\sigma_2(\tau'_{n}))\\
	&\quad -  \sigma_2(\tau_{n+1})\sigma_1\sigma_2(\tau'_n)(\lambda(1+a(x)\nu)\tau_n\sigma_1(\tau'_{n-1})-\lambda\sigma_1(\tau_n)\tau'_{n-1})]\\
	&\quad +\sigma_1(\tau'_n)\sigma_2(\tau'_n)\sigma_1\sigma_2(\tau_n)\tau_n-\sigma_1(\tau_n)\sigma_2(\tau_n)\sigma_1\sigma_2(\tau'_n)\tau'_n\\
	&=\lambda b(y)[ \sigma_2(\tau_{n+1})\sigma_1\sigma_2(\tau'_n)\sigma_1(\tau_n)\tau'_{n-1}-\sigma_1(\tau'_{n-1})\tau_n\sigma_2\sigma_1(\tau_{n+1})\sigma_2(\tau'_{n})]\\
	&\quad +\sigma_1(\tau'_n)\sigma_2(\tau'_n)\sigma_1\sigma_2(\tau_n)\tau_n-\sigma_1(\tau_n)\sigma_2(\tau_n)\sigma_1\sigma_2(\tau'_n)\tau'_n\\
	&=\sigma_2(\tau'_n)\tau_n\sigma_1[\tau'_n\sigma_2(\tau_n)-\lambda b(y)\tau'_{n-1}\sigma_2(\tau_{n+1})]-\sigma_1(\sigma_2(\tau'_n)\tau_n)[\sigma_2(\tau_n)\tau'_n-\lambda b(y)\sigma_2(\tau_{n+1})\tau'_{n-1}]\\
	&=\sigma_2(\tau'_n)\tau_n [\sigma_1\sigma_2(\tau'_n)\sigma_1(\tau_n)-\mu b(y)\sigma_1\sigma_2(\tau'_n)\sigma_1(\tau_n)]-\sigma_1(\sigma_2(\tau'_n)\tau_n)[\sigma_2(\tau'_n)\tau_n-\mu b(y)\sigma_2(\tau'_n)\tau_n]\\
	&=0.
\end{align*}
In this way, we have completed the proof of Proposition 1 by showing that $\tau'_n$ satisfies the equivalent form \eqref{ef1} of \eqref{qTL}.  
%\begin{align}
%	a(x)\sigma_1(\tau_{n-1})\tau'_n&=\lambda(1+a(x)\nu)\tau_n\sigma_1(\tau'_{n-1})-\lambda\sigma_1(\tau_n)\tau'_{n-1},\\
%	b(y)\sigma_2(\tau_{n+1})\tau'_{n-1}&=\lambda^{-1}\sigma_2(\tau_n)\tau'_n-\lambda^{-1}(1-b(y)\mu)\sigma_2(\tau'_n)\tau_n,
%\end{align}
\end{proof}

In the following, we are going to derive a generalized Lax pair for the $q$-2DTL equation \eqref{qTL} from its bilinear B$\ddot{a}$cklund transformation \eqref{bBT1} and \eqref{bBT2}. For this purpose, we set $\phi_{n+1}=\tau'_n/\tau_n$. Then \eqref{bBT1} and \eqref{bBT2} transform into the following Lax pair 
\begin{align}
	D_1(\phi_n)&=\frac{1}{\lambda(1+\nu a(x))}(\phi_{n+1}-\lambda \nu \phi_n+\lambda J_n\phi_n),\label{LP1}\\
	D_2(\phi_n)&=\frac{1}{1-\mu b(y)}(\mu\phi_n-\lambda V_{n-1}\phi_{n-1}).\label{LP2}
\end{align}
It is not difficult to check that the compatibility condition of the Lax pair \eqref{LP1} and \eqref{LP2} gives the nonlinear $q$-2DTL equation \eqref{E1} and \eqref{E2} or equivalently, the bilinear $q$-2DTL equation \eqref{qTL}.

\begin{rem}
By choosing $\nu=\mu=0$ and $\lambda=-1$, one can easily derive the existing B$\ddot{a}$cklund transformation \eqref{kBT1}$\sim$\eqref{kBT2} and Lax pair \eqref{kLP1}$\sim$\eqref{kLP2} from the generalized B$\ddot{a}$cklund transformation \eqref{bBT1}$\sim$\eqref{bBT2} and Lax pair \eqref{LP1}$\sim$\eqref{LP2}, respectively.
\end{rem}

\begin{rem}
Under the continuum limit $q\rightarrow 1$, the bilinear $q$-2DTL equation \eqref{qTL} becomes the well-known bilinear 2DTL equation 
\begin{align}
D_xD_y\tau_n\cdot \tau_n=2(\tau_{n+1}\tau_{n-1}-\tau_n^2) \label{TL}
\end{align}
where the bilinear operator is defined by
\begin{align*}
D_x^mD_t^n f(x,t)\cdot g(x,t)=(\partial_x-\partial_{x'})^m(\partial_t-\partial_{t'})^nf(x,t)g(x',t')|_{x'=x,t'=t}.
\end{align*}
Correspondingly, the bilinear B$\ddot{a}$cklund transformation \eqref{bBT1} and \eqref{bBT2} become 
\begin{align*}
&D_y\tau_n\cdot \tau_n'=\lambda \tau_{n+1}\tau_{n-1}'-\mu \tau_n\tau_n',\\
&D_x \tau_{n}\cdot\tau_{n-1}'=-\lambda^{-1}\tau_{n-1}\tau_n'+\nu \tau_n\tau_{n-1}'
\end{align*}
which are nothing but the bilinear B$\ddot{a}$cklund transformation for the well-known bilinear 2DTL equation \eqref{TL} appearing in \cite{HR1,HR2}. This justifies the rationality of our results.
\end{rem}

%Based on the results as above, it is not difficult to derive B$\ddot{a}$cklund transformation for \eqref{cbQT} from the generalized Lax pair \eqref{LP1} and \eqref{LP2} by calculating the $q$-derivatives of $\phi_{n+1}/\phi_n=X_n'X_n^{-1}$
%\begin{align}
%	D_1(X_n'X_n^{-1})&=\frac{1}{\lambda(1+\nu a(x))} \left\{ [X_{n+1}'X_{n+1}^{-1}+\lambda D_1(X_{n+1})X_{n+1}^{-1}-\lambda\nu ]X_n'X_n^{-1} \right.\nonumber\\
%	&\left. \quad-[X_n'X_n^{-1}+\lambda D_1(X_{n})X_{n}^{-1}-\lambda \nu]\sigma_1(X_n'X_n^{-1})    \right\},\\
%	D_2(X_n'X_n^{-1})&=\frac{1}{1-\mu b(y)}\left\{\mu [X'_nX_n^{-1}-\sigma_2(X_n'X_n^{-1})]-\lambda[\sigma_2(X_{n+1})X_n^{-1}-\sigma_2(X_n')X_{n-1}'^{-1}]\right\}
%\end{align}
%where $X_n=\tau_n\tau_{n-1}^{-1}$ and $X'_n=\tau'_n\tau_{n-1}'^{-1}$ are a pair of solutions of \eqref{cbQT}. This result will be used later.

\section{Darboux transformation for the $q$-2DTL equation}
In literature, Casoratian solutions for the $q$-2DTL equation \eqref{qTL} and its B$\ddot{a}$cklund transformation \eqref{kBT1}$\sim$\eqref{kBT2} were constructed by using Hirota's bilinear method. In this section, we will explain how to recover these solutions by Darboux transformation for \eqref{cbQT}.

It has been shown that the $q$-2DTL equation has the Casoratian solution \cite{KOS}
\begin{align}
	\tau_n=
	\begin{vmatrix}
		f_n^{(1)}&f_{n+1}^{(1)}&\cdots&f_{n+N-1}^{(1)}\\
		f_n^{(2)}&f_{n+1}^{(2)}&\cdots&f_{n+N-1}^{(2)}\\
		\vdots&\vdots&\cdots&\vdots\\
		f_n^{(N)}&f_{n+1}^{(N)}&\cdots&f_{n+N-1}^{(N)}
	\end{vmatrix},\label{QTBFS}
\end{align}
where $f_n^{(k)}, k=1,\cdots,N,$ satisfy the dispersion relations
\begin{equation}\label{DS}
	D_1f_n^{(k)}=-f_{n+1}^{(k)},\,\ D_2f_n^{(k)}=f_{n-1}^{(k)}.
\end{equation}
In addition, Casoratian solutions to B$\ddot{a}$cklund transformation \eqref{kBT1}$\sim$\eqref{kBT2}  were presented as \cite{LNS}
\begin{align}
\tau_n&=|0,1,\cdots,N-1|,\label{bBS1}\\
\tau_n'&=|0,1,\cdots,N-1,N|,\label{bBS2}
\end{align}
where we have adopted the compact notations in \cite{KOS} for the sake of simplicity.

%
%
%   &&
\subsection{Casoratian solutions to the bilinear $q$-2DTL equation and its B$\ddot{a}$cklund transformation }
In \cite{LNS}, a noncommutative $q$-2DTL equation was considered whose quasi-Casoratian solutions were constructed by Darboux transformation. Actually the noncommutative $q$-2DTL equation considered is exactly the same as \eqref{cbQT} formally. The only difference lies in that two different functions are generally 
noncommutative in noncommutative setting. Therefore, Darboux transformation for the noncommutative $q$-2DTL equation also holds for the $q$-2DTL equation \eqref{cbQT}. 

Let $\theta_{n,i},\, i=1,\dots,N,N+1$ be a particular set of eigenfunctions of the linear system \eqref{kLP1}$\sim$\eqref{kLP2} and introduce the notation $\Theta_n=(\theta_{n,1},\dots,\theta_{n,N})$. The Darboux transformation, determined by a particular eigenfunction $\theta_n$ for the special Lax pair \eqref{kLP1} and \eqref{kLP2} is 
%the Darboux transformation of the general eigenfunction $\phi$ by
\begin{align}
	\tilde{\phi}_n&
	%=\sigma_1(\theta_n)D_1\theta_n^{-1}\phi_n
	%=(D_1-D_1(\theta_n)\theta_n^{-1})\phi_n
	=-\phi_{n+1}+\theta_{n+1}\theta_n^{-1}\phi_n,\nonumber\\
	%\tilde{K}_n&=K_{n+1}+\sigma_1(\theta_{n+1}\theta_n^{-1})-\theta_{n+2}\theta_{n+1}^{-1}=\{D_1(\theta_{n+1}\theta_n^{-1})+\sigma_1(\theta_{n+1}\theta_n^{-1})K_n\}\theta_n\theta_{n+1}^{-1},\nonumber\\
	%\tilde{W}_n&=\sigma_2(\theta_{n+2}\theta_{n+1}^{-1})W_n\theta_n\theta_{n+1}^{-1}=W_{n+1}-D_2(\theta_{n+2}\theta_{n+1}^{-1}),\nonumber\\
	\tilde{X}_n&=-\theta_{n+1}\theta_n^{-1}X_n,\nonumber
\end{align}
and this may be iterated by defining
\begin{align}
	\phi_n[k+1]&=-\phi_{n+1}[k]+\theta_{n+1}[k]\theta_n^{-1}[k]\phi_n[k],\nonumber\\
	%K_n[k+1]&=K_{n+1}[k]+\sigma_1(\theta_{n+1}[k]\theta_n^{-1}[k])-\theta_{n+2}[k]\theta_{n+1}^{-1}[k],\nonumber\\
	%W_n[k+1]&=\sigma_2(\theta_{n+2}[k]\theta_{n+1}^{-1}[k])W_n[k]\theta_n[k]\theta_{n+1}^{-1}[k],\nonumber\\
	X_n[k+1]&=-\theta_{n+1}[k]\theta_n^{-1}[k]X_n[k],\nonumber
\end{align}
where $\phi_n[1]=\phi_n,\, X_n[1]=X_n$ and 
$$\theta_n[k]=\phi_n[k]|_{\phi_n\rightarrow\theta_{n,k}}.$$

In what follows, we will show by induction that the results of $N$-repeated Darboux transformations $\phi_n[N+1]$ and $X_n[N+1]$ can be expressed in closed forms as
\begin{align}
	\phi_n[N+1]&=
	\begin{vmatrix}
		\Theta_{n+N}&\phi_{n+N}\\
		\Theta_{n+N-1}&\phi_{n+N-1}\\
		\vdots&\vdots\\
		\Theta_{n}&\phi_{n}
	\end{vmatrix}\cdot\begin{vmatrix}
		\Theta_{n+N-1}\\
		\vdots\\
		\Theta_{n}
	\end{vmatrix}^{-1}\nonumber\\
	&=(-1)^N\begin{vmatrix}
		\theta_{n,1}&\dots&\theta_{n+N,1}\\
		\vdots&\ddots&\vdots\\
		\theta_{n,N}&\dots&\theta_{n+N,N}\\
		\phi_{n}&\dots&\phi_{n+N}
	\end{vmatrix}\cdot\begin{vmatrix}
		\theta_{n,1}&\dots&\theta_{n+N-1,1}\\
		\vdots&\ddots&\vdots\\
		\theta_{n,N}&\dots&\theta_{n+N-1,N}
	\end{vmatrix}^{-1},\label{EFD1}\\
	X_n[N+1]&=(-1)^N
	\begin{vmatrix}
		\Theta_{n+N}\\
		\vdots\\
		\Theta_{n+1}
	\end{vmatrix}\cdot\begin{vmatrix}
		\Theta_{n+N-1}\\
		\vdots\\
		\Theta_{n}
	\end{vmatrix}^{-1}X_{n}\nonumber\\
	&=(-1)^N\begin{vmatrix}
		\theta_{n+1,1}&\dots&\theta_{n+N,1}\\
		\vdots&\ddots&\vdots\\
		\theta_{n+1,N}&\dots&\theta_{n+N,N}
	\end{vmatrix}\cdot\begin{vmatrix}
		\theta_{n,1}&\dots&\theta_{n+N-1,1}\\
		\vdots&\ddots&\vdots\\
		\theta_{n,N}&\dots&\theta_{n+N-1,N}
	\end{vmatrix}^{-1}X_{n}.\label{EFD2}
\end{align}

The initial case $N=1$ is obviously true for $\phi_n[N+1]$ and $X_n[N+1]$. By using the Jacobi identity, we have
\begin{align*}
	\phi_n[N+2]=&-\phi_{n+1}[N+1]+\theta_{n+1}[N+1]\theta_n[N+1]^{-1}\phi_n[N+1]\\
	=&-\left(\begin{vmatrix}
		\Theta_{n+N+1}&\phi_{n+N+1}\\
		%\Theta_{n+N}&\phi_{n+N}\\
		\vdots&\vdots\\
		\Theta_{n+1}&\phi_{n+1}
	\end{vmatrix}\cdot \begin{vmatrix}
		\Theta_{n+N}&\theta_{n+N,N+1}\\
		%\Theta_{n+N}&\phi_{n+N}\\
		\vdots&\vdots\\
		\Theta_{n}&\theta_{n,N+1}
	\end{vmatrix}-\begin{vmatrix}
		\Theta_{n+N+1}&\theta_{n+N+1,N+1}\\
		%\Theta_{n+N}&\theta_{n+N,N+1}\\
		\vdots&\vdots\\
		\Theta_{n+1}&\theta_{n+1,N+1}
	\end{vmatrix}\cdot \begin{vmatrix}
		\Theta_{n+N}&\phi_{n+N}\\
		%\Theta_{n+N}&\theta_{n+N,N+1}\\
		\vdots&\vdots\\
		\Theta_{n}&\phi_{n}
	\end{vmatrix}\right)\\
	&\cdot\begin{vmatrix}
		\Theta_{n+N}\\
		%\Theta_{n+N}&\theta_{n+N,N+1}\\
		\vdots\\
		\Theta_{n+1}
	\end{vmatrix}^{-1}\cdot \begin{vmatrix}
		\Theta_{n+N}&\theta_{n+N,N+1}\\
		%\Theta_{n+N}&\phi_{n+N}\\
		\vdots&\vdots\\
		\Theta_{n}&\theta_{n,N+1}
	\end{vmatrix}^{-1}\\
	=&\begin{vmatrix}
		\Theta_{n+N+1}&\theta_{n+N+1,N+1}&\phi_{n+N+1}\\
		%\Theta_{n+N}&\theta_{n+N,N+1}\\
		\vdots&\vdots&\vdots\\
		\Theta_{n}&\theta_{n,N+1}&\phi_n
	\end{vmatrix}\cdot \begin{vmatrix}
		\Theta_{n+N}\\
		%\Theta_{n+N}&\theta_{n+N,N+1}\\
		\vdots\\
		\Theta_{n+1}
	\end{vmatrix}\cdot\begin{vmatrix}
		\Theta_{n+N}\\
		%\Theta_{n+N}&\theta_{n+N,N+1}\\
		\vdots\\
		\Theta_{n+1}
	\end{vmatrix}^{-1}\cdot \begin{vmatrix}
		\Theta_{n+N}&\theta_{n+N,N+1}\\
		%\Theta_{n+N}&\phi_{n+N}\\
		\vdots&\vdots\\
		\Theta_{n}&\theta_{n,N+1}
	\end{vmatrix}^{-1}\\
	=&\begin{vmatrix}
		\Theta_{n+N+1}&\theta_{n+N+1,N+1}&\phi_{n+N+1}\\
		%\Theta_{n+N}&\theta_{n+N,N+1}\\
		\vdots&\vdots&\vdots\\
		\Theta_{n}&\theta_{n,N+1}&\phi_n
	\end{vmatrix}\cdot \begin{vmatrix}
		\Theta_{n+N}&\theta_{n+N,N+1}\\
		%\Theta_{n+N}&\phi_{n+N}\\
		\vdots&\vdots\\
		\Theta_{n}&\theta_{n,N+1}
	\end{vmatrix}^{-1}
\end{align*}
and 
\begin{align*}
	X_n[N+2]=&-\theta_{n+1}[N+1]\theta_n^{-1}[N+1]X_n[N+1]\\
	&=(-1)^{N+1}\begin{vmatrix}
		\Theta_{n+N+1}&\theta_{n+N+1,N+1}\\
		%\Theta_{n+N}&\theta_{n+N,N+1}\\
		\vdots&\vdots\\
		\Theta_{n+1}&\theta_{n+1,N+1}
	\end{vmatrix}\cdot \begin{vmatrix}
		\Theta_{n+N}\\
		%\Theta_{n+N}&\theta_{n+N,N+1}\\
		\vdots\\
		\Theta_{n+1}
	\end{vmatrix}^{-1}\cdot\begin{vmatrix}
		\Theta_{n+N-1}\\
		%\Theta_{n+N}&\theta_{n+N,N+1}\\
		\vdots\\
		\Theta_{n}
	\end{vmatrix}\cdot\begin{vmatrix}
		\Theta_{n+N}&\theta_{n+N,N+1}\\
		%\Theta_{n+N}&\theta_{n+N,N+1}\\
		\vdots&\vdots\\
		\Theta_{n}&\theta_{n,N+1}
	\end{vmatrix}^{-1}\\
	&\cdot \begin{vmatrix}
		\Theta_{n+N}\\
		\vdots\\
		\Theta_{n+1}
	\end{vmatrix}\cdot\begin{vmatrix}
		\Theta_{n+N-1}\\
		\vdots\\
		\Theta_{n}
	\end{vmatrix}^{-1}X_{n}\\
	&=(-1)^{N+1}\begin{vmatrix}
		\Theta_{n+N+1}&\theta_{n+N+1,N+1}\\
		%\Theta_{n+N}&\theta_{n+N,N+1}\\
		\vdots&\vdots\\
		\Theta_{n+1}&\theta_{n+1,N+1}
	\end{vmatrix}\cdot\begin{vmatrix}
		\Theta_{n+N}&\theta_{n+N,N+1}\\
		%\Theta_{n+N}&\theta_{n+N,N+1}\\
		\vdots&\vdots\\
		\Theta_{n}&\theta_{n,N+1}
	\end{vmatrix}^{-1}X_{n}.
\end{align*}
These prove the inductive steps for both $\phi_n[N+1]$ and $X_n[N+1]$ and thus the proof is completed. 
%So far, we have proved the Casoratian solutions to the $q$-2DTL equation \eqref{cbQT}. 
%By using of Induction, one can easily prove that the expression for $\phi_n[N+1]$ holds based on Jacobi identity, and the expression for $X_n[N+1]$ can be proved directly. At a certain point, it is easier to prove in noncommutative case than commutative case. And, if it's true in noncommutative cases, it's true in commutative as well. For the Wronskian solutions to the $q$-2DTL equation and its B$\ddot a$cklund transformation constructed by Darboux transformation, the proof by direct substitution was given by Ohta etc.\cite{}.

Notice the transformations $X_{n+1}=\tau_{n+1}\tau_{n}^{-1}$ and $\phi_{n+1}=\tau_n'\tau_n^{-1}$. From the expressions for $\phi_n[N+1]$ and $X_n[N+1]$ given by \eqref{EFD1} and \eqref{EFD2}, one can easily derive solutions to the bilinear $q$-2DTL equation \eqref{qTL} and its B$\ddot a$cklund transformation \eqref{kBT1}$\sim$\eqref{kBT2}. By taking the seed solution $X_n=1$, we have 
%$X_{n}[N+1]$ and $\phi_{n}[N+1]$ provide solutions to the Therefore, $\phi_n[N+1]$ and $X_n[N+1]$ given by \eqref{EFD1} and \eqref{EFD2} constitute a pair of solutions to Lax pair \eqref{kLP1}$\sim$\eqref{kLP2}. 
%It is clear that the factor $(-1)^N$ in the expressions of $\phi_n[N+1]$ and $X_n[N+1]$ can be discarded since it is non-essential.  
 \begin{align}
 X_n[N+1]=\tau_{n+1}\tau_n^{-1},\, \phi_n[N+1]=\tau_n'\tau_n^{-1}
 \end{align}
 with
 \begin{equation}\label{DTS}
	\tau_n=
	\begin{vmatrix}
		\theta_{n,1}&\cdots&\theta_{n+N-1,1}\\
		\theta_{n,2}&\cdots&\theta_{n+N-1,2}\\
		\vdots&\cdots&\vdots\\
		\theta_{n,N}&\cdots&\theta_{n+N-1,N}
	\end{vmatrix},\, \,\tau_n'=\begin{vmatrix}
		\theta_{n,1}&\dots&\theta_{n+N,1}\\
		\vdots&\ddots&\vdots\\
		\theta_{n,N}&\dots&\theta_{n+N,N}\\
		\phi_{n}&\dots&\phi_{n+N}
	\end{vmatrix} 
\end{equation}
where $\theta_{n,k}, k=1,\cdots,N$ and $\phi_n$ satisfy the dispersion relations reduced from Lax pair \eqref{kLP1}$\sim$\eqref{kLP2} 
\begin{align}
	&D_1\phi_n=-\phi_{n+1},\\
	&D_2\phi_n=\phi_{n-1}.
\end{align}
In this sense, we construct Casoratian solutions to the $q$-2DTL equation \eqref{qTL} and its B$\ddot{a}$cklund transformation \eqref{kBT1}$\sim$\eqref{kBT2} by Darboux transformation. These solutions are exactly the same as the ones obtained by Hirota's method when we replace $\phi_n$ by $\theta_{n,N+1}$ in \eqref{DTS}.

%Since $(X_{n}[N+1],\,\phi_{n}[N+1])$ are nothing solutions to the Lax pair, noticing $X_{n+1}=\tau_{n+1}/\tau_{n}$ and $\phi_{n+1}=\tau_n'/\tau_n$ and denoting $X_{n}[N+1]=X_{n+1}$ and $\phi_{n}[N+1]=\tau_n'/\tau_n$, it is obvious that the expressions for 
\begin{rem}
	We have discarded the factor $(-1)^N$ in the expressions of $X_{n+1}[N+1]$ and $\phi_{n+1}[N+1]$ since it is not essential. 
\end{rem}

\section{Binary Darboux transformation for the q-2DTL equation}
In this section, we will construct binary Darboux transformation for the $q$-2DTL equation \eqref{cbQT} which will be shown to give Grammian solutions to the bilinear $q$-2DTL equation \eqref{qTL} and its B$\ddot{a}$cklund transformation \eqref{kBT1}$\sim$\eqref{kBT2}. These solutions will be proved both by induction and direct verification. In addition, we will present these solutions expressed in terms of quantum integrals. 

\subsection{Grammian solutions to the bilinear $q$-2DTL equation and its B$\ddot{a}$cklund transformation}
The linear equations \eqref{kLP1} and \eqref{kLP2} have the formal adjoints 
\begin{eqnarray}
	&&D_1\psi_n=\sigma_1(\psi_{n-1})-J_n^\dagger\sigma_1(\psi_n),\label{AF1}\\
	&&D_2\psi_n=-V^\dagger_n\sigma_2(\psi_{n+1}),\label{AF2}
\end{eqnarray}
where $\dagger$ presents Hermite conjugate. 

Following the standard procedure of a binary Darboux transformation, we introduce a potential $\Omega(\theta_n,\rho_n)$ satisfying the three conditions:
\begin{eqnarray}
	&&D_1\Omega(\theta_n,\rho_n)=\sigma_1(\rho^\dagger_n)\theta_{n+1},\\
	&&D_2\Omega(\theta_n,\rho_n)=\sigma_2(\rho^\dagger_{n+1})V_n\theta_n,\\
	&&\Omega(\theta_n,\rho_n)-\Omega(\theta_{n-1},\rho_{n-1})=-\rho^\dagger_n\theta_n.
\end{eqnarray}
A binary Darboux transformation is then defined by 
\begin{eqnarray}
	&&\tilde{\phi_n}=\phi_n-\theta_n\Omega(\theta_{n-1},\rho_{n-1})^{-1}\Omega(\phi_{n-1},\rho_{n-1}),\\
	&&\tilde{\psi_n}=\psi_n-\rho_n\Omega(\theta_n,\rho_n)^{-\dagger}\Omega(\theta_n,\psi_n)^\dagger,\\
	%&&\tilde{J_n}=J_n-\sigma_1\left(\frac{\theta_n\rho_{n-1}}{\Omega(\theta_{n-1},\rho_{n-1})}\right)+\frac{\theta_{n+1}\rho_n}{\Omega(\theta_n,\rho_n)},\\
	%&&\tilde{V_n}=\sigma_2\left(\frac{\Omega(\theta_{n+1},\rho_{n+1})}{\Omega(\theta_{n},\rho_{n})}\right)\frac{\Omega(\theta_{n-1},\rho_{n-1})}{\Omega(\theta_{n},\rho_{n})}V_n,\\
	&&\tilde X_n=(1-\theta_{n}\Omega(\theta_{n-1},\rho_{n-1})^{-1}\rho_{n}^\dagger)X_n,
	%&&\tilde \tau_n=\Omega(\theta_n,\rho_n)\tau_n.
\end{eqnarray}
%Two key identities used in proof.
%\begin{eqnarray*}
%&&\sigma_2(\Omega(\theta_{n+1},\rho_{n+1}))=\Omega(\theta_{n},\rho_{n})-\theta_{n+1}\sigma_2(\rho_{n+1}),\\
%&&\sigma_2(\psi_{n+1})(\sigma_2(\Omega(\theta_{n+1},\rho_{n+1})-\Omega(\theta_{n},\rho_{n}))=\sigma_2(\rho_{n+1})(\sigma_2(\Omega(\theta_{n+1},\psi_{n+1})-\Omega(\theta_{n},\psi_{n}))
%\end{eqnarray*}
Besides the set of particular eigenfunctions $\theta_{n,i}, i=1,\dots,N,N+1$ of the Lax pair \eqref{kLP1} and \eqref{kLP2}, Let $\rho_{n,j}$ for $j=1,\dots,N$ be a set of particular eigenfunctions of the adjoint Lax pair  \eqref{AF1} and \eqref{AF2}. Then the binary Darboux transformation can be defined recursively by
\begin{eqnarray*}
	&&\phi_n[N+1]=\phi_n[N]-\theta_n[N]\Omega(\theta_{n-1}[N],\rho_{n-1}[N])^{-1}\Omega(\phi_{n-1}[N],\rho_{n-1}[N]),\\
	&&\psi_n[N+1]=\psi_n[N]-\rho_n[N]\Omega(\theta_n[N],\rho_n[N])^{-\dagger}\Omega(\theta_n[N],\psi_n[N])^\dagger,\\
	&&X_n[N+1]=(1-\theta_{n}[N]\Omega(\theta_{n-1}[N],\rho_{n-1}[N])^{-1}\rho_{n}[N]^\dagger)X_n[N],
\end{eqnarray*}
where $\phi_n[1]=\phi_n$, $\psi_n[1]=\psi_n$, $X_n[1]=X_n$ and 
\begin{equation*}
	\theta_n[N]=\phi_n[N]|_{\phi_n\rightarrow \theta_{n,N}},\,\, \rho_n[N]=\psi_n[N]|_{\psi_n\rightarrow \rho_{n,N}}
\end{equation*}

Using the notation $\Theta_n=(\theta_{n,1},\dots,\theta_{n,N})$ and $P_n=(\rho_{n,1},\dots,\rho_{n,N})$, it is easy to prove by induction that for $N\ge 1$, 
\begin{align}
	\phi_n[N+1]&=\begin{vmatrix}
		\Omega(\Theta_{n-1},P_{n-1})&\Omega(\phi_{n-1},P_{n-1})\\
		\Theta_n&\phi_n\end{vmatrix}\cdot\begin{vmatrix}
		\Omega(\Theta_{n-1},P_{n-1})\end{vmatrix}^{-1},\label{NE}\\
	\psi_n[N+1]&=\begin{vmatrix}
		\Omega(\Theta_{n},P_{n})^\dagger&\Omega(\Theta_{n},\psi_{n})^\dagger\\
		P_n&\psi_n
	\end{vmatrix}\cdot\begin{vmatrix}
		\Omega(\Theta_{n},P_{n})^\dagger
	\end{vmatrix}^{-1}\label{NAE}
\end{align}
and
\begin{equation} \Omega(\phi_n[N+1],\psi_n[N+1])=
	\begin{vmatrix}
		\Omega(\Theta_n,P_n)&\Omega(\phi_n,P_n)\\
		\Omega(\Theta_n,\psi_n)&\Omega(\phi_n,\psi_n)
	\end{vmatrix}\cdot\begin{vmatrix}
		\Omega(\Theta_n,P_n)
	\end{vmatrix}^{-1}.\label{NP}
\end{equation}
We may thus after $N$-step binary Darboux transformations obtain
\begin{equation}
	X_n[N+1]=
%	\begin{vmatrix}
%		\Omega(\Theta_{n},P_{n})\end{vmatrix}\cdot\begin{vmatrix}
%		\Omega(\Theta_{n-1},P_{n-1}) \end{vmatrix}^{-1}X_n=
		\begin{vmatrix}
		\Omega(\Theta_{n-1},P_{n-1})&P_n^\dagger\\
		\Theta_n&1
	\end{vmatrix}\cdot\begin{vmatrix}
		\Omega(\Theta_{n-1},P_{n-1}) \end{vmatrix}^{-1}X_n.\label{QG}
\end{equation}
In fact, we can prove the above results by induction. 
\begin{align*}
	X_n[N+2]=&(1-\theta_n[N+1]\Omega(\Theta_{n-1}[N+1],P_{n-1}[N+1])^{-1}\rho_n[N+1]^\dagger)X_n[N+1]\\
	=&\left(1-\begin{vmatrix}
		\Omega(\Theta_{n-1},P_{n-1})&\Omega(\theta_{n-1,N+1},P_{n-1})\\
		\Theta_n&\theta_{n,N+1}\end{vmatrix}\cdot \begin{vmatrix}
		\Omega(\Theta_{n-1},P_{n-1})\end{vmatrix}^{-1}\right.\\
	&\cdot\begin{vmatrix}
		\Omega(\Theta_{n-1},P_{n-1})
	\end{vmatrix}\cdot \begin{vmatrix}
		\Omega(\Theta_{n-1},P_{n-1})&\Omega(\theta_{n-1,N+1},P_{n-1})\\
		\Omega(\Theta_{n-1},\rho_{n-1,N+1})&\Omega(\theta_{n-1,N+1},\rho_{n-1,N+1})
	\end{vmatrix}^{-1}\\
	&\left.\cdot \begin{vmatrix}
		\Omega(\Theta_{n},P_{n})&P_n^\dagger\\
		\Omega(\Theta_{n},\rho_{n,N+1})&\rho_{n,N+1}^\dagger
	\end{vmatrix}\cdot\begin{vmatrix}
		\Omega(\Theta_{n},P_{n})
	\end{vmatrix}^{-1}\right)\cdot \begin{vmatrix}
		\Omega(\Theta_{n},P_{n})\end{vmatrix}\cdot\begin{vmatrix}
		\Omega(\Theta_{n-1},P_{n-1}) \end{vmatrix}^{-1}X_n\\
		=&\left(\begin{vmatrix}
		\Omega(\Theta_{n},P_{n})\end{vmatrix}-\begin{vmatrix}
		\Omega(\Theta_{n-1},P_{n-1})&\Omega(\theta_{n-1,N+1},P_{n-1})\\
		\Theta_n&\theta_{n,N+1}\end{vmatrix}\right.\\
	&\cdot \begin{vmatrix}
		\Omega(\Theta_{n-1},P_{n-1})&\Omega(\theta_{n-1,N+1},P_{n-1})\\
		\Omega(\Theta_{n-1},\rho_{n-1,N+1})&\Omega(\theta_{n-1,N+1},\rho_{n-1,N+1})
	\end{vmatrix}^{-1}\left.\cdot \begin{vmatrix}
		\Omega(\Theta_{n},P_{n})&P_n^\dagger\\
		\Omega(\Theta_{n},\rho_{n,N+1})&\rho_{n,N+1}^\dagger
	\end{vmatrix}\right)\cdot \begin{vmatrix}
		\Omega(\Theta_{n-1},P_{n-1}) \end{vmatrix}^{-1}X_n
\end{align*}
Noticing
\begin{align*}
\begin{vmatrix}
	\Omega(\Theta_{n},P_{n})\end{vmatrix}=
		\begin{vmatrix}
		\Omega(\Theta_{n-1},P_{n-1})&P_n^\dagger\\
		\Theta_n&1
	\end{vmatrix}
\end{align*}
and
\begin{align*}
	\begin{vmatrix}
		\Omega(\Theta_{n},P_{n})&P_n^\dagger\\
		\Omega(\Theta_{n},\rho_{n,N+1})&\rho_{n,N+1}^\dagger
	\end{vmatrix}=\begin{vmatrix}\Omega(\Theta_{n-1},P_{n-1})&P_n^\dagger\\
		\Omega(\Theta_{n-1},\rho_{n-1,N+1})&\rho_{n,N+1}^\dagger
	\end{vmatrix},
\end{align*}
by using Jacobi identity, 
\begin{align*}
	&\begin{vmatrix}
		\Omega(\Theta_{n-1},P_{n-1})&\Omega(\theta_{n-1,N+1},P_{n-1})&P_n^\dagger\\
		\Omega(\Theta_{n-1},\rho_{n-1,N+1})&\Omega(\theta_{n-1,N+1},\rho_{n-1,N+1})&\rho_{n,N+1}^\dagger\\
		\Theta_n&\theta_{n,N+1}&1
	\end{vmatrix}\cdot \begin{vmatrix}
		\Omega(\Theta_{n-1},P_{n-1}) \end{vmatrix}\\ 
	&=\begin{vmatrix}
		\Omega(\Theta_{n-1},P_{n-1})&P_n^\dagger\\
		\Theta_n&1
	\end{vmatrix}\cdot  \begin{vmatrix}
		\Omega(\Theta_{n-1},P_{n-1})&\Omega(\theta_{n-1,N+1},P_{n-1})\\
		\Omega(\Theta_{n-1},\rho_{n-1,N+1})&\Omega(\theta_{n-1,N+1},\rho_{n-1,N+1})
	\end{vmatrix}\\
	&\quad- \begin{vmatrix}
		\Omega(\Theta_{n-1},P_{n-1})&\Omega(\theta_{n-1,N+1},P_{n-1})\\
		\Theta_n&\theta_{n,N+1}\end{vmatrix} \cdot  \begin{vmatrix}\Omega(\Theta_{n-1},P_{n-1})&P_n^\dagger\\
		\Omega(\Theta_{n-1},\rho_{n-1,N+1})&\rho_{n,N+1}^\dagger  \end{vmatrix}    
		 \end{align*}
we have
\begin{align*}
	X_n[N+2]
%	=&\left(\begin{vmatrix}
%		\Omega(\Theta_{n-1},P_{n-1})&P_n^\dagger\\
%		\Theta_n&1
%	\end{vmatrix}- \begin{vmatrix}
%		\Omega(\Theta_{n-1},P_{n-1})&\Omega(\theta_{n-1,N+1},P_{n-1})\\
%		\Theta_n&\theta_{n,N+1}\end{vmatrix}\right. \\
%	&\left. \cdot  \begin{vmatrix}
%		\Omega(\Theta_{n-1},P_{n-1})&\Omega(\theta_{n-1,N+1},P_{n-1})\\
%		\Omega(\Theta_{n-1},\rho_{n-1,N+1})&\Omega(\theta_{n-1,N+1},\rho_{n-1,N+1})
%	\end{vmatrix}^{-1} \cdot  \begin{vmatrix}\Omega(\Theta_{n-1},P_{n-1})&P_n^\dagger\\
%		\Omega(\Theta_{n-1},\rho_{n-1,N+1})&\rho_{n,N+1}^\dagger  \end{vmatrix}      \right)\begin{vmatrix}
%		\Omega(\Theta_{n-1},P_{n-1}) \end{vmatrix}^{-1}X_n\\
	=& \begin{vmatrix}
		\Omega(\Theta_{n-1},P_{n-1})&\Omega(\theta_{n-1,N+1},P_{n-1})&P_n^\dagger\\
		\Omega(\Theta_{n-1},\rho_{n-1,N+1})&\Omega(\theta_{n-1,N+1},\rho_{n-1,N+1})&\rho_{n,N+1}^\dagger\\
		\Theta_n&\theta_{n,N+1}&1
	\end{vmatrix}\\
	&\cdot\begin{vmatrix}
		\Omega(\Theta_{n-1},P_{n-1})&\Omega(\theta_{n-1,N+1},P_{n-1})\\
		\Omega(\Theta_{n-1},\rho_{n-1,N+1})&\Omega(\theta_{n-1,N+1},\rho_{n-1,N+1})
	\end{vmatrix}^{-1}X_n.
\end{align*}

Similarly, because of $X_{n+1}=\tau_{n+1}\tau_{n}^{-1}$ and $\phi_{n+1}=\tau_n'\tau_n^{-1}$, one can easily derive Grammian solutions to the bilinear $q$-2DTL equation \eqref{qTL} and its B$\ddot a$cklund transformation \eqref{kBT1}$\sim$\eqref{kBT2} from the expressions for $\phi_n[N+1]$ and $X_n[N+1]$ given by \eqref{NE} and \eqref{QG}. Take the seed solution $X_n=1$, we have 
\begin{equation}\label{QTBFS}
	\tau_n=
	\begin{vmatrix}
		\Omega(\Theta_{n-1},P_{n-1}) \end{vmatrix}=|\Omega(\theta_{n-1,j},\rho_{n-1,i})|_{1\le i,j\le N},\,\, \tau_n'=\begin{vmatrix}
		\Omega(\Theta_{n-1},P_{n-1})&\Omega(\phi_{n-1},P_{n-1})\\
		\Theta_n&\phi_n\end{vmatrix}.\end{equation}
Correspondingly, $\theta_{n,i}$ and $\rho_{n,i}$ satisfy dispersion relations produced from the Lax pair \eqref{kLP1}$\sim$\eqref{kLP2} and the adjoint Lax pair \eqref{AF1}$\sim$\eqref{AF2}
\begin{align}
	D_1{\theta_{n,i}}&=-\theta_{n+1,i},\, D_2\theta_{n,i}=\theta_{n-1,i},\label{qDR1}\\
D_1\rho_{n,i}&=\sigma_1(\rho_{n-1,i}),\, D_2\rho_{n,i}=-\sigma_2(\rho_{n+1,i}),\label{qDR2}
\end{align}
and $\Omega(\theta_{n-1,j},\rho_{n-1,i})$ is defined by
\begin{align}
	&D_1\Omega(\theta_{n,j},\rho_{n,i})=\sigma_1(\rho_{n,i}^\dagger)\theta_{n+1,j},\label{EX1}\\
	&D_2\Omega(\theta_{n,j},\rho_{n,i})=\sigma_2(\rho_{n+1,i}^\dagger)\theta_{n,j},\label{EX2}\\
	&\Omega(\theta_{n,j},\rho_{n,i})-\Omega(\theta_{n-1,j},\rho_{n-1,i})=-\rho_{n,i}^\dagger\theta_{n,j}. \label{EX3}
\end{align}

In this sense, we construct Grammian solutions to the bilinear $q$-2DTL equation \eqref{qTL} and its B$\ddot a$cklund transformation \eqref{kBT1}$\sim$\eqref{kBT2} by binary Darboux transformation.

\subsection{Direct verifications}
In this part, we will provide direct verifications of Grammian solutions given by \eqref{QTBFS} to the bilinear $q$-2DTL equation \eqref{qTL} and its B$\ddot{a}$cklund transformation \eqref{kBT1}$\sim$\eqref{kBT2}, respectively. 

%We rewrite $\tau_n$ given by \eqref{QTBFS} in the following form
%\begin{equation}
%\tau_n=\begin{vmatrix}
%	\Omega(\Theta_{n-1},P_{n-1}) \end{vmatrix}=|\Omega(\theta_{n-1,j},\rho_{n-1,i})|_{1\le i,j\le N},\label{bQTS}
%\end{equation}
%where $\Omega(\theta_{n-1,j},\rho_{n-1,i})$ satisfies the following dispersion relations
%\begin{align}
%	&D_1\Omega(\theta_{n-1,j},\rho_{n-1,i})=\sigma_1(\rho_{n-1,i}^\dagger)\theta_{n,j},\label{EX1}\\
%	&D_2\Omega(\theta_{n-1,j},\rho_{n-1,i})=\sigma_2(\rho_{n,i}^\dagger)\theta_{n-1,j},\label{EX2}\\
%	&\Omega(\theta_{n,j},\rho_{n,i})-\Omega(\theta_{n-1,j},\rho_{n-1,i})=-\rho_{n,i}^\dagger\theta_{n,j}.\label{EX3}
%\end{align}
Actually, we have after detailed calculations 
\begin{align}
	&\sigma_2(\tau_{n+1})=\begin{vmatrix}\Omega(\Theta_{n-1},P_{n-1})&\sigma_2(P_n^\dagger)\\
		\Theta_n&1
	\end{vmatrix},\label{CS1}\\
	&\sigma_1(\tau_n)=-a(x)\begin{vmatrix}\Omega(\Theta_{n-1},P_{n-1})&\sigma_1(P_{n-1}^\dagger)\\
		\Theta_n&-a(x)^{-1}
	\end{vmatrix}\label{CS2},\\
	&\sigma_1(\tau_{n-1})=-\begin{vmatrix}
		\Omega(\Theta_{n-1},P_{n-1})&\sigma_1(P_{n-1}^\dagger)\\
		\Theta_{n-1}&-1
	\end{vmatrix},\label{CS3}\\
	&(1+a(x)b(y))\sigma_1\sigma_2(\tau_n)=a(x)b(y)\begin{vmatrix}\Omega(\Theta_{n-1},P_{n-1})&\sigma_1(P_{n-1}^\dagger)&\sigma_2(P_n^\dagger)\\
		\Theta_n&-a(x)^{-1}&1\\
		\Theta_{n-1}&-1&-b(y)^{-1}
	\end{vmatrix}.\label{CS4}
\end{align}
Substituting \eqref{CS1}$\sim$\eqref{CS4} into \eqref{qTL}, we get the Jacobi identity
\begin{align*}
	&\begin{vmatrix}\Omega(\Theta_{n-1},P_{n-1})&\sigma_1(P_{n-1}^\dagger)&\sigma_2(P_n^\dagger)\\
		\Theta_n&a(x)^{-1}&1\\
		\Theta_{n-1}&-1&-b(y)^{-1}
	\end{vmatrix}\begin{vmatrix}\Omega(\Theta_{n-1},P_{n-1})
	\end{vmatrix}\\
	&=\begin{vmatrix}\Omega(\Theta_{n-1},P_{n-1})&\sigma_1(P_{n-1}^\dagger)\\
		\Theta_n&-a(x)^{-1}
	\end{vmatrix}\begin{vmatrix}\Omega(\Theta_{n-1},P_{n-1})&\sigma_2(P_n^\dagger)\\
		\Theta_{n-1}&-b(y)^{-1}
	\end{vmatrix}\\
	&\quad-\begin{vmatrix}\Omega(\Theta_{n-1},P_{n-1})&\sigma_2(P_n^\dagger)\\
		\Theta_n&1\\
	\end{vmatrix}\begin{vmatrix}\Omega(\Theta_{n-1},P_{n-1})&\sigma_1(P_{n-1}^\dagger)\\
		\Theta_{n-1}&-1
	\end{vmatrix}.
\end{align*}
Thus, $\tau_{n}$ given by \eqref{QTBFS} is a solution to the bilinear $q$-2DTL equation \eqref{qTL}.

Similarly, we can calculate to obtain
%To prove that $\tau_n$ and $\tau_{n}'$ given by \eqref{QTBFS} satisfy the bilinear B$\ddot{a}$cklund transformation \eqref{kBT1}$\sim$\eqref{kBT2}, we need to calculate 
%\begin{eqnarray}
%	&&\tau_n(x,q^{\beta}y)\tau_{n}'+\tau_n\tau_{n}'(x,q^{\beta}y)-b(y)\tau_{n+1}(x,q^{\beta}y)\tau_{n-1}'=0,\\
%	&&\tau_n(x,y)\tau_{n-1}'(q^{\alpha}x,y)-\tau_n(q^{\alpha}x,y)\tau_{n-1}'(x,y)+a(x)\tau_{n-1}(q^{\alpha}x,y)\tau_{n}'=0.
%\end{eqnarray}
%According to the form of $\tau_{n}'$, we have the following expressions:
\begin{align}
	&\tau_{n-1}'=\begin{vmatrix}\Omega(\Theta_{n-1},P_{n-1})&\Omega(\phi_{n-1},P_{n-1})\\
		\Theta_{n-1}&\phi_{n-1}
	\end{vmatrix},\label{CS5}\\
	&\sigma_2(\tau_{n}')=-b(y)\begin{vmatrix}\Omega(\Theta_{n-1},P_{n-1})&\Omega(\phi_{n-1},P_{n-1})&\sigma_2(P_n^{\dagger})\\
		\Theta_{n}&\phi_{n}&1\\
		\Theta_{n-1}&\phi_{n-1}&-b(y)^{-1}
	\end{vmatrix},\label{CS6}\\
	&\sigma_1(\tau_{n-1}')=-a(x)\begin{vmatrix}\Omega(\Theta_{n-1},P_{n-1})&\Omega(\phi_{n-1},P_{n-1})&\sigma_1(P_{n-1}^{\dagger})\\
		\Theta_{n-1}&\phi_{n-1}&-1\\
		\Theta_{n}&\phi_{n}&-a(x)^{-1}
	\end{vmatrix}.\label{CS7}
\end{align}
Substituting \eqref{CS1}$\sim$\eqref{CS7} into \eqref{kBT1} and \eqref{kBT2} will lead to the following two Jacobi identities
\begin{align*}
	&\begin{vmatrix}\Omega(\Theta_{n-1},P_{n-1})&\Omega(\phi_{n-1},P_{n-1})&\sigma_1(P_{n-1}^{\dagger})\\
		\Theta_{n-1}&\phi_{n-1}&-1\\
		\Theta_{n}&\phi_{n}&-a(x)^{-1}
	\end{vmatrix}\begin{vmatrix}\Omega(\Theta_{n-1},P_{n-1})\end{vmatrix})\\
	&=\begin{vmatrix}\Omega(\Theta_{n-1},P_{n-1})&\Omega(\phi_{n-1},P_{n-1})\\
		\Theta_{n-1}&\phi_{n-1}
	\end{vmatrix}\begin{vmatrix}\Omega(\Theta_{n-1},P_{n-1})&\sigma_1(P_{n-1}^{\dagger})\\
		\Theta_{n}&-a(x)^{-1}
	\end{vmatrix}\\
	&-\begin{vmatrix}\Omega(\Theta_{n-1},P_{n-1})&\Omega(\phi_{n-1},P_{n-1})\\
		\Theta_{n}&\phi_{n}
	\end{vmatrix}\begin{vmatrix}\Omega(\Theta_{n-1},P_{n-1})&\sigma_1(P_{n-1}^{\dagger})\\
		\Theta_{n-1}&-1
	\end{vmatrix}
\end{align*}
and
\begin{align*}
	&\begin{vmatrix}\Omega(\Theta_{n-1},P_{n-1})&\Omega(\phi_{n-1},P_{n-1})&\sigma_2(P_n^{\dagger})\\
		\Theta_{n}&\phi_{n}&1\\
		\Theta_{n-1}&\phi_{n-1}&-b(y)^{-1}
	\end{vmatrix}\begin{vmatrix}\Omega(\Theta_{n-1},P_{n-1})\end{vmatrix})\\
	&=\begin{vmatrix}\Omega(\Theta_{n-1},P_{n-1})&\Omega(\phi_{n-1},P_{n-1})\\
		\Theta_{n}&\phi_{n}
	\end{vmatrix}\begin{vmatrix}\Omega(\Theta_{n-1},P_{n-1})&\sigma_2(P_n^{\dagger})\\
		\Theta_{n-1}&b(y)^{-1}
	\end{vmatrix}\\
	&-\begin{vmatrix}\Omega(\Theta_{n-1},P_{n-1})&\Omega(\phi_{n-1},P_{n-1})\\
		\Theta_{n-1}&\phi_{n-1}
	\end{vmatrix}\begin{vmatrix}\Omega(\Theta_{n-1},P_{n-1})&\sigma_2(P_n^{\dagger)}\\
		\Theta_{n}&1
	\end{vmatrix}.
\end{align*}
This indicates that $\tau_n$ and $\tau_n'$ given by \eqref{QTBFS} are indeed Grammian solutions to the bilinear B$\ddot{a}$cklund transformation \eqref{kBT1}$\sim$\eqref{kBT2}. 
%
%The above results not only help us to verify the solutions obtained by binary Darboux transformation, but also provide us a new way to find some solutions to the modified $q-$ 2DTL equation. Furthermore these reveal the deep relationship betwween the binary DT and the bilinear method. 

\subsection{Quantum integral representation}
%\cite{WHT} \cite{VKPC}
%
%In order to derive the reduction of the $q-$ 2DTL equation and its modified system, we give the explicit solutions expressed in terms of quantum integrals.
%
% \eqref{QTBFS} in terms of quantum integrals 

We will first go over some results in quantum calculus \cite{VKPC}. Denote the $q$-shift operator $\hat{M}_qF(x)=F(qx)$. The $q$-difference operator is defined as
\begin{equation*}
	D_{q,x}F(x)=\frac{F(qx)-F(x)}{(q-1)x}=\frac{1}{(q-1)x}(\hat{M}_q-	1)F(x).
\end{equation*}
Given a function $f(x)$, its anti-derivative $F(x)$ for $D_{q,x}F(x)=f(x)$ can be written as
\begin{equation*}
	F(x)=\frac{1}{1-\hat{M}_q}(1-q)xf(x)=(1-q)x\sum_{j=0}^{\infty}q^jf(q^jx).
\end{equation*}
In other words, the anti-derivative of $f(x)$ or the quantum integral of $f(x)$ is 
\begin{equation*}
	\int f(x)d_{q}x=F(x)=(1-q)x\sum_{j=0}^{\infty}q^jf(q^jx).
\end{equation*}
%Based on the Integration by parts 
%\begin{eqnarray*}
%	&&f(x)g(x)=\int D_{q,x}(f(x)g(x))d_qx\\
%	&&=\int D_{q,x}(f(x))g(x)d_qx+\int f(qx)D_{q,x}(g(x))d_qx
%\end{eqnarray*}

The $q$-difference operator adopted in this paper is a bit different. However, we can still define quantum integrals similarly.
For the $q$-difference operator defined by 
\begin{align*}
D_1F(x)=\frac{1}{(q-1)x}(\sigma_1-1)F(x),
\end{align*}
assume that $D_1F(x)=f(x)$ for a given function $f(x)$, then we have
\begin{align*}
F(x)=\frac{1}{1-\sigma_1}(1-q)xf(x)=(1-q)x\sum_{j=0}^{\infty}q^{\alpha j}f(q^{\alpha j}x).
\end{align*}
Therefore, we can write the quantum integral of $f(x)$ formally as
\begin{align*}
\int f(x)d_{q^\alpha }x=F(x)=(1-q)x\sum_{j=0}^{\infty}q^{\alpha j}f(q^{\alpha j}x).
\end{align*}

Since
\begin{align*}
 D_1\Omega(\theta_{n-1,j},\rho_{n-1,i})=\sigma_1(\rho_{n-1,i}^\dagger)\theta_{n,j},
\end{align*}
we have 
\begin{align*}
\Omega(\theta_{n-1,j},\rho_{n-1,i})=\int \sigma_1(\rho_{n-1,i}^\dagger)\theta_{n,j} d_{q^\alpha }x
\end{align*}
which makes it possible to express the Grammian solution to the bilinear $q$-2DTL equation \eqref{qTL} in terms of quantum integals
\begin{equation}\label{GQI}
	\tau_n=
	\begin{vmatrix}
		\Omega(\Theta_{n-1},P_{n-1}) \end{vmatrix}=\begin{vmatrix}
		\Omega(\theta_{n-1,j},\rho_{n-1,i}) \end{vmatrix}_{1\le i,j\le N}=
	\begin{vmatrix}\int \sigma_1(\rho_{n-1,i}^\dagger)\theta_{n,j}d_{q^\alpha}x \end{vmatrix}.
\end{equation}
In the same way, one can express $\tau_n'$ in terms of quantum integrals. 
\begin{rem}
According to the definition of quantum integrals, the Grammian solution \eqref{GQI} can be expressed in terms of formal series, too. A similar Gramm-type determinant solution for the bilinear $q$-2DTL equation \eqref{qTL} was reported in \cite{WHT}. We would like to point out that these two Grammian solutions are equivalent under certain transformations. 
\end{rem}

\section{Periodic reductions}
It is well known that periodic 2 reduction and periodic 3 reduction of the two-dimensional Toda lattice equation produce sine-Gordon equation and Tzitzeica equation, individually. It is of great research interest to study periodic reductions of $q$-2DTL equations as well. In this section, we will consider the $2$-periodic reduction of the $q$-2DTL equation \eqref{cbQT}. We propose a $q$-difference sine-Gordon ($q$-sG) equation, a modified $q$-difference sine-Gordon ($q$-msG) equation and present their solutions for the first time. 

\subsection{The $q$-difference sine-Gordon equation and solutions}
By imposing the periodic condition $X_n=X_{n+2}$ on the $q$-2DTL equation \eqref{cbQT}, we have  
\begin{align}
	D_2(D_1(X_0)X_0^{-1})&=\sigma_2(X_{1})X_0^{-1}-\sigma_1(\sigma_2(X_0)X_{1}^{-1}),\label{qsG1}\\
	D_2(D_1(X_1)X_1^{-1})&=\sigma_2(X_{0})X_1^{-1}-\sigma_1(\sigma_2(X_1)X_{0}^{-1}).\label{qsG2}
\end{align}
If we assume $X=X_0=\bar{X_1}$ where $\bar{X_1}$ represents the conjugate of $X_1$, then \eqref{qsG1} and \eqref{qsG2} are reduced to a single equation
\begin{align}
D_2(D_1(X)X^{-1})&=\sigma_2(\bar{X})X^{-1}-\sigma_1(\sigma_2(X)\bar{X}^{-1}). \label{rqsG}
\end{align}

 Rewrite $X=\lambda_1 f/\bar{f}$ with $\lambda_1$ being a real constant.  By taking the continuum limit $q\rightarrow 1$, \eqref{rqsG} will transform to   
\begin{align*}
	(\ln f/\bar{f})_{xy}&=\bar{f}^2/f^2-1+1-f^2/\bar{f}^2
\end{align*}
which implies 
\begin{align*}
	(\ln f)_{xy}&=\bar{f}^2/f^2-1
\end{align*}
or equivalently,
\begin{align}
D_xD_y f\cdot f&=2\bar {f}^2-2f^2.\label{bsG}
\end{align}
It is well known that \eqref{bsG} is nothing but the bilinear sine-Gordon equation. In this sense, we call \eqref{qsG1} and \eqref{qsG2} the $q$-sG equation, and \eqref{rqsG} the reduced $q$-sG equation.

Before we proceed to derive determinant solutions to the $q$-sG equation, we need to review some properties of $q$-exponential functions. There are two kinds of $q$-exponential functions. But the $q$-exponential functions adopted here are slightly altered from the original ones in \cite{VKPC}. Given $q$-exponential functions
\begin{align*}
	&e_{q^\alpha}^x=\sum\limits_{j=0}^\infty \frac{x^j}{[j]_{q^\alpha}!},\ \,e_{q\beta}^y=\sum\limits_{k=0}^\infty \frac{y^k}{[k]_{q^\beta}!},\\
	&E_{q^\alpha}^x=\sum\limits_{j=0}^\infty ({q^\alpha})^{j(j-1)/2} \frac{x^j}{[j]_{q^\alpha}!},\,\  E_{q^\beta}^y=\sum\limits_{k=0}^\infty ({q^\beta})^{k(k-1)/2} \frac{y^k}{[k]_{q^\beta}!}
\end{align*}
with $[j]_{q^\alpha}!=[j]_{q^\alpha}[j-1]_{q^\alpha}\cdots [1]_{q^\alpha},\,[j]_{q^\alpha}=\frac{(q^\alpha)^j-1}{q-1},\,[k]_{q^\beta}!=[k]_{q^\beta}[k-1]_{q^\beta}\cdots [1]_{q^\beta}$ and $[k]_{q^\beta}=\frac{(q^\beta)^j-1}{q-1}$.
Noticing $q^{j(j-1)/2}=q^{(j-1)(j-2)/2+(j-1)}$, it is not difficult to prove that 
\begin{align*}
	&D_1e_{q^\alpha}^x=e_{q^\alpha}^x,\,\ D_1e_{q^\alpha}^{px}=pe_{q^\alpha}^{px},\\
	&D_1 E_{q^\alpha}^x=E_{q^\alpha}^{q^\alpha x},\,\ D_1 E_{q^\alpha}^{px}=pE_{q^\alpha}^{q^\alpha( px)}.
\end{align*}
%Here we revise the above-mentioned definitions a little bit
%\begin{align*}
%	&E_{q^\alpha}^x=\sum\limits_{j=0}^\infty {(q^\alpha)}^{j(j-1)/2} \frac{x^j}{[j]_q!},\, E_{q^\beta}^y=\sum\limits_{k=0}^\infty {(q^\beta)}^{k(k-1)/2} \frac{y^k}{[k]_q!}
%\end{align*}

As is shown before, the $q$-2DTL equation \eqref{cbQT} has solutions $X_{n+1}=\tau_{n+1}\tau_{n}^{-1}$ with $\tau_n$ given by \eqref{QTBFS}. In what follows, we will construct solutions to \eqref{qsG1} and \eqref{qsG2} by making $2$-periodic reduction on $X_{n+1}$. First we choose the simplest non-trivial solutions of \eqref{qDR1} and \eqref{qDR2}
%Using the relations (4.14)-(4.15), the elements $\theta_{n,j}$ and $\rho_{n,i}$ in the determinant solution (4.21) of the $q-$ 2DTL equation can be expressed in the following forms
\begin{align*}
	\theta_{n,j}=b_j(p_j)^{n}e_{q^\alpha}^{-p_j x}e_{q^\beta}^{\frac{1}{p_j}y},\,\rho_{n,i}=a_i(r_i)^{-n}E_{q^\alpha}^{r_i x}E_{q^\beta}^{-\frac{1}{r_i}y}
\end{align*}
where $a_i$ and $b_j$ are constants and then we obtain 
\begin{align*}
	D_1(\rho_{n,i}\theta_{n+1,j})&=D_1(\rho_{n,i})\theta_{n+1,j}+\sigma_1(\rho_{n,i})D_1(\theta_{n+1,j})\\
	%&=r_i\sigma_1(\rho_{n,i})\theta_{n+1,j}-p_j\sigma_1(\rho_{n,i})\theta_{n+1,j}\\
	&=(r_i-p_j)\sigma_1(\rho_{n,i})\theta_{n+1,j}\\
	&=(r_i-p_j)\Omega(\theta_{n,j},\rho_{n,i})
\end{align*}
which implies
\begin{align*}
	\Omega(\theta_{n,j},\rho_{n,i})&=\delta_{i,j}+\frac{1}{r_i-p_j}\rho_{n,i}\theta_{n+1,j} \\
%\end{align*}
%which is the following expression
%
%\begin{align*}
%\Omega(\theta_{n,j},\rho_{n,i})=\sum\limits_{s=0}^\infty \theta_{s+n+1,i}\rho_{s+n+1,j}
%\end{align*}
	&=\delta_{i,j}+\frac{a_ib_jp_j}{r_i-p_j}\left(\frac{p_j}{r_i}\right)^ne_{q^\alpha}^{-p_j x}e_{q^\beta}^{\frac{1}{p_j}y}E_{q^\alpha}^{r_i x}E_{q^\beta}^{-\frac{1}{r_i}y}\\
	&=\left(\frac{p_j}{r_i}\right)^n\left( \delta_{i,j}\left(\frac{r_i}{p_j}\right)^n  + \frac{a_ib_jp_j}{r_i-p_j}e_{q^\alpha}^{-p_j x}e_{q^\beta}^{\frac{1}{p_j}y}E_{q^\alpha}^{r_i x}E_{q^\beta}^{-\frac{1}{r_i}y}  \right).
\end{align*}
The choice of constant of integration as $\delta_{i,j}$ is needed to effect the periodic reduction that will be made shortly. 

With these prerequisites, we have
\begin{align*}
	X_{n+1}&=\tau_{n+1}\tau_n^{-1}=\begin{vmatrix}
		\Omega(\Theta_{n-1},P_{n-1})&P_n^\dagger\\
		\Theta_n&1
	\end{vmatrix}\cdot\begin{vmatrix}
		\Omega(\Theta_{n-1},P_{n-1}) \end{vmatrix}^{-1}\\
	&=\frac{\begin{vmatrix}
			\left(\frac{p_j}{r_i}\right)^{n-1}\left( \delta_{i,j}\left(\frac{r_i}{p_j}\right)^{n-1}  + \frac{a_ib_jp_j}{r_i-p_j}e_{q^\alpha}^{-p_j x}e_{q^\beta}^{\frac{1}{p_j}y}E_{q^\alpha}^{r_i x}E_{q^\beta}^{-\frac{1}{r_i}y}  \right)& \left(a_i(r_i)^{-n}E_{q^\alpha}^{r_i x}E_{q^\beta}^{-\frac{1}{r_i}y}\right)^\dagger\\
			b_j(p_j)^{n}e_{q^\alpha}^{-p_j x}e_{q^\beta}^{\frac{1}{p_j}y}&1
	\end{vmatrix}}{\begin{vmatrix}
			\left(\frac{p_j}{r_i}\right)^{n-1}\left( \delta_{i,j}\left(\frac{r_i}{p_j}\right)^{n-1}  + \frac{a_ib_jp_j}{r_i-p_j}e_{q^\alpha}^{-p_j x}e_{q^\beta}^{\frac{1}{p_j}y}E_{q^\alpha}^{r_i x}E_{q^\beta}^{-\frac{1}{r_i}y}  \right) \end{vmatrix}}\\
	&=\frac{\begin{vmatrix}
			\left(\frac{p_j}{r_i}\right)^{n}\left( \delta_{i,j}\left(\frac{r_i}{p_j}\right)^{n}  + \frac{a_ib_jr_i}{r_i-p_j}e_{q^\alpha}^{-p_j x}e_{q^\beta}^{\frac{1}{p_j}y}E_{q^\alpha}^{r_i x}E_{q^\beta}^{-\frac{1}{r_i}y}  \right)& \left(a_i(r_i)^{-n}E_{q^\alpha}^{r_i x}E_{q^\beta}^{-\frac{1}{r_i}y}\right)^\dagger\\
			b_j(p_j)^{n}e_{q^\alpha}^{-p_j x}e_{q^\beta}^{\frac{1}{p_j}y}&1
	\end{vmatrix}}{\begin{vmatrix}
			\left(\frac{p_j}{r_i}\right)^{n}\left( \delta_{i,j}\left(\frac{r_i}{p_j}\right)^{n}  + \frac{a_ib_jr_i}{r_i-p_j}e_{q^\alpha}^{-p_j x}e_{q^\beta}^{\frac{1}{p_j}y}E_{q^\alpha}^{r_i x}E_{q^\beta}^{-\frac{1}{r_i}y}  \right) \end{vmatrix}}\\
	&=\frac{\begin{vmatrix}
			\delta_{i,j}\left(\frac{r_i}{p_j}\right)^{n}  + \frac{a_ib_jr_i}{r_i-p_j}e_{q^\alpha}^{-p_j x}e_{q^\beta}^{\frac{1}{p_j}y}E_{q^\alpha}^{r_i x}E_{q^\beta}^{-\frac{1}{r_i}y}  & \left(a_iE_{q^\alpha}^{r_i x}E_{q^\beta}^{-\frac{1}{r_i}y}\right)^\dagger\\
			b_je_{q^\alpha}^{-p_j x}e_{q^\beta}^{\frac{1}{p_j}y}&1
	\end{vmatrix}}{\begin{vmatrix}
			\delta_{i,j}\left(\frac{r_i}{p_j}\right)^{n}  + \frac{a_ib_jr_i}{r_i-p_j}e_{q^\alpha}^{-p_j x}e_{q^\beta}^{\frac{1}{p_j}y}E_{q^\alpha}^{r_i x}E_{q^\beta}^{-\frac{1}{r_i}y}   \end{vmatrix}}.
\end{align*}
It is obvious from this expression for $X_{n+1}$ that it is $2$-periodic only if $(r_1/p_1)^2=\cdots=(r_N/p_N)^2=1$, i.e. $r_i=-p_i=\lambda_i$ for $i=1,\dots,N$. Therefore, the $q$-sG equation \eqref{qsG1} and \eqref{qsG2} has the solutions 
\begin{align}
	X_0&=\frac{\begin{vmatrix}
			- \delta_{i,j} + \frac{a_ib_j\lambda_i}{\lambda_i+\lambda_j}e_{q^\alpha}^{\lambda_j x}e_{q^\beta}^{-\frac{1}{\lambda_j}y}E_{q^\alpha}^{\lambda_i x}E_{q^\beta}^{-\frac{1}{\lambda_i}y}  & \left(a_iE_{q^\alpha}^{\lambda_i x}E_{q^\beta}^{-\frac{1}{\lambda_i}y}\right)^\dagger\\
			b_je_{q^\alpha}^{\lambda_j x}e_{q^\beta}^{-\frac{1}{\lambda_j}y}&1
	\end{vmatrix}}{\begin{vmatrix}
			-\delta_{i,j}+ \frac{a_ib_j\lambda_i}{\lambda_i+\lambda_j}e_{q^\alpha}^{\lambda_j x}e_{q^\beta}^{-\frac{1}{\lambda_j}y}E_{q^\alpha}^{\lambda_i x}E_{q^\beta}^{-\frac{1}{\lambda_i}y}   \end{vmatrix}},\label{oS1}\\
	X_1&=\frac{\begin{vmatrix}
			\delta_{i,j}  + \frac{a_ib_j\lambda_i}{\lambda_i+\lambda_j}e_{q^\alpha}^{\lambda_j x}e_{q^\beta}^{-\frac{1}{\lambda_j}y}E_{q^\alpha}^{\lambda_i x}E_{q^\beta}^{-\frac{1}{\lambda_i}y}  & \left(a_iE_{q^\alpha}^{\lambda_i x}E_{q^\beta}^{-\frac{1}{\lambda_i}y}\right)^\dagger\\
			b_je_{q^\alpha}^{\lambda_j x}e_{q^\beta}^{-\frac{1}{\lambda_j}y}&1
	\end{vmatrix}}{\begin{vmatrix}
			\delta_{i,j} + \frac{a_ib_j\lambda_i}{\lambda_i+\lambda_j}e_{q^\alpha}^{\lambda_j x}e_{q^\beta}^{-\frac{1}{\lambda_j}y}E_{q^\alpha}^{\lambda_i x}E_{q^\beta}^{-\frac{1}{\lambda_i}y}  \end{vmatrix}}.	\label{oS2}
			\end{align}
If we further require that $a_i$ are real and $b_j$ are pure imaginary, then we can infer that $X=X_0=\bar{X_1}$, which provides solutions to the reduced $q$-sG equation \eqref{rqsG}. 

\subsection{The modified $q$-difference sine-Gordon equation and solutions}
%\begin{align}
%D_1(X_0')X_0-D_1(X_0)X_0'&=\sigma_1(X_0)X_{1}^{-1}[D_1(X_{1})X_0'-X_{1}'X_0']+\sigma_1(X_0')[X_0'-D_1(X_0)],\\
%D_2(X_0')X_0-D_2(X_0)X_0'&=X_0\sigma_2(X_0)\left( \sigma_2(X_{1})X_0^{-1}-\sigma_2(X_0')X_{1}'^{-1}  \right),\\
%D_1(X_1')X_1-D_1(X_1)X_1'&=\sigma_1(X_1)X_{0}^{-1}[D_1(X_{0})X_1'-X_{0}'X_1']+\sigma_1(X_1')[X_1'-D_1(X_1)],\\
%D_2(X_1')X_1-D_2(X_1)X_1'&=X_1\sigma_2(X_1)\left( \sigma_2(X_{0})X_1^{-1}-\sigma_2(X_1')X_{0}'^{-1}  \right).
%\end{align}
In Section 2, the generalized bilinear B$\ddot{a}$cklund transformation was proposed for the bilinear $q$-2DTL equation \eqref{qTL}. In the following, we are going to derive B$\ddot{a}$cklund transformation for the $q$-2DTL equation \eqref{cbQT} which paves the way to derive B$\ddot{a}$cklund transformation for the $q$-sG equation \eqref{qsG1} and \eqref{qsG2}.  

Assume that $X_{n+1}'=\tau_{n+1}'\tau_{n}'^{-1}$ and $X_{n+1}=\tau_{n+1}\tau_{n}^{-1}$ are two solutions of \eqref{cbQT}. Notice $\phi_{n+1}=\tau_n'\tau_n^{-1}. $By making use of the Lax pair \eqref{kLP1}$\sim$\eqref{kLP2} and considering the $q$-derivatives of $\phi_{n+1}/\phi_n=X_n'X_n^{-1}$, we have 
\begin{align}
	D_1(\phi_{n+1}/\phi_n)&=D_1(\phi_{n+1})/\phi_n-\sigma_1(\phi_{n+1})D_1(\phi_n)/(\phi_n\sigma_1(\phi_n))\nonumber\\
	&=-\frac{\phi_{n+2}}{\phi_{n+1}} \frac{\phi_{n+1}}{\phi_n}+J_{n+1}\frac{\phi_{n+1}}{\phi_n}-\sigma_1\left(\frac{\phi_{n+1}}{\phi_n}\right)\left(J_n-\frac{\phi_{n+1}}{\phi_n}\right),\label{mLP1}\\
	D_2(\phi_{n+1}/\phi_n)&=D_2(\phi_{n+1})/\phi_n-\sigma_2(\phi_{n+1})D_2(\phi_n)/(\phi_n\sigma_2(\phi_n))\nonumber\\
	&=V_n-V_{n-1}\sigma_2\left(\frac{\phi_{n+1}}{\phi_n}\right)\frac{\phi_{n-1}}{\phi_n},\label{mLP2}
\end{align}
where we have used the equality
\begin{align*}
D_i(f/g)=(D_i(f)g-D_i(g)f)/(g\sigma_i(g)),\,\, i=1,2.
\end{align*}
%\begin{align}
%D_1(X_n')X_n-D_1(X_n)X_n'&=\sigma_1(X_n)X_{n+1}^{-1}[D_1(X_{n+1})X_n'-X_{n+1}'X_n']+\sigma_1(X_n')[X_n'-D_1(X_n)],\\
%D_2(X_n')X_n-D_2(X_n)X_n'&=X_n\sigma_2(X_n)\left( \sigma_2(X_{n+1})X_n^{-1}-\sigma_2(X_n')X_{n-1}'^{-1}  \right)
%\end{align}
Apparently, \eqref{mLP1} and \eqref{mLP2} give rise to the B$\ddot a$cklund transformation for \eqref{cbQT}: 
\begin{align}
	D_1(X_n'X_n^{-1})&=-X_{n+1}'X_{n+1}^{-1}X_n'X_n^{-1}+D_1(X_{n+1})X_{n+1}^{-1}X_n'X_n^{-1}\nonumber\\
	&\quad-\sigma_1(X_n'X_n^{-1})(D_1(X_{n})X_{n}^{-1}-X_n'X_n^{-1}),\label{BTsG1}\\
	D_2(X_n'X_n^{-1})&=\sigma_2(X_{n+1})X_n^{-1}-\sigma_2(X_{n})\sigma_2(X_n'X_n^{-1})X_{n-1}'^{-1}.\label{BTsG2}
\end{align}

%\eqref{BTsG1} and \eqref{BTsG2} letting $X=X_0=\bar{X_1}$, and considering its $2$-periodic reduction,

 By considering the $2$-periodic reduction of \eqref{BTsG1} and \eqref{BTsG2}, we finally obtain the B$\ddot a$cklund transformation for the $q$-sG equation \eqref{qsG1} and \eqref{qsG2}
\begin{align}
D_1(X_0'X_0^{-1})&=-X_{1}'X_{1}^{-1}X_0'X_0^{-1}+D_1(X_{1})X_{1}^{-1}X_0'X_0^{-1}\nonumber\\
	&\quad-\sigma_1(X_0'X_0^{-1})(D_1(X_{0})X_{0}^{-1}-X_0'X_0^{-1}),\label{BTsG3}\\
	D_2(X_0'X_0^{-1})&=\sigma_2(X_{1})X_0^{-1}-\sigma_2(X_{0})\sigma_2(X_0'X_0^{-1})X_{1}'^{-1}.\label{BTsG4}\\
	D_1(X_1'X_1^{-1})&=-X_{0}'X_{0}^{-1}X_1'X_1^{-1}+D_1(X_{0})X_{0}^{-1}X_1'X_1^{-1}\nonumber\\
	&\quad-\sigma_1(X_1'X_1^{-1})(D_1(X_{1})X_{1}^{-1}-X_1'X_1^{-1}),\label{BTs5}\\
	D_2(X_1'X_1^{-1})&=\sigma_2(X_{0})X_1^{-1}-\sigma_2(X_{1})\sigma_2(X_1'X_1^{-1})X_{0}'^{-1}.\label{BTsG6}
\end{align}
If we go one step further, by setting $X=X_0=\bar{X_1}$ and $X'=X_0'=\bar{X_1'}$, we have
\begin{align}
	D_1(X'X^{-1})&=-\bar{X'}\bar{X}^{-1}X'X^{-1}+D_1(\bar{X})\bar{X}^{-1}X'X^{-1}\nonumber\\
	&\quad-\sigma_1(X'X^{-1})(D_1(X)X^{-1}-X'X^{-1}),\label{rBT1}\\
	D_2(X'X^{-1})&=\sigma_2(\bar{X})X^{-1}-\sigma_2(X)\sigma_2(X'X^{-1})\bar{X'}^{-1},\label{rBT2}
\end{align}
which is the B$\ddot{a}$cklund transformation for the reduced $q$-sG equation \eqref{rqsG}.

In fact, \eqref{rBT1} and \eqref{rBT2} transform to
\begin{align*}
	&(\ln X')_x=-\bar{X'}\bar{X}^{-1}+(\ln\bar{X})_x+X'X^{-1},\\
	&(X'X^{-1})_y=\bar{X}X^{-1}-X'\bar{X'}^{-1}
\end{align*}
by taking the continuum limit $q\rightarrow 1$. If we rewrite $X=\lambda_1 f/\bar{f}$ and $X'=\lambda_2 g/\bar{g}$ with $\lambda_i$ for $i=1,2$ being real constants, we have 
\begin{align*}
	&D_x(f \cdot \bar{g})/(f\bar{g})-D_x(\bar{f} \cdot g)/(\bar{f} g) =\lambda_2\lambda_1^{-1}(g\bar{f}/(f\bar{g})-\bar{g}f/(\bar{f}g)),\\
	&fgD_y(\bar{f}\cdot \bar{g})-\bar{f}\bar{g}D_y(f\cdot g)=\lambda_1\lambda_2^{-1}\left((\bar{f}\bar{g})^2-(fg)^2\right)
\end{align*}
which produces the bilinear B$\ddot{a}$cklund transformation for the bilinear sine-Gordon equation \eqref{bsG}
\begin{align*}
	D_x(f\cdot \bar g)&=\lambda_2\lambda_1^{-1}\bar{f}g,\\
	D_y(\bar f\cdot \bar{g})&=-\lambda_1\lambda_2^{-1}fg.
\end{align*}
In this sense, we call \eqref{BTsG3}$\sim$\eqref{BTsG6} the modified $q$-sG equation, and \eqref{rBT1}$\sim$\eqref{rBT2} the reduced modified $q$-sG equation.

It is obvious that \eqref{BTsG1} and \eqref{BTsG2} have solutions $X_{n+1}'=\tau_{n+1}'\tau_{n}'^{-1}$ and $X_{n+1}=\tau_{n+1}\tau_{n}^{-1}$ with $\tau_n$ and $\tau_n'$ given by \eqref{QTBFS}. Denote $\tau_n'\triangleq\tau_n'|_{\phi_n\rightarrow \theta_{n,N+1}}$. By virtue of invariance under elementary row and column operations of determinants, we have
\begin{align*}
X_{n+1}'&=\tau_{n+1}'\tau_{n}'^{-1}\\
&=\begin{vmatrix}
		\Omega(\Theta_{n},P_{n})&\Omega(\phi_{n},P_{n})\\
		\Theta_{n+1}&\phi_{n+1}\end{vmatrix}
\begin{vmatrix}
		\Omega(\Theta_{n-1},P_{n-1})&\Omega(\phi_{n-1},P_{n-1})\\
		\Theta_n&\phi_n\end{vmatrix}^{-1}\\
		&=\frac{\begin{vmatrix}
			\delta_{i,j}+\frac{a_ib_jp_j}{r_i-p_j}\left(\frac{p_j}{r_i}\right)^ne_{q^\alpha}^{-p_j x}e_{q^\beta}^{\frac{1}{p_j}y}E_{q^\alpha}^{r_i x}E_{q^\beta}^{-\frac{1}{r_i}y}&\delta_{i,N+1}+\frac{a_ib_{N+1}p_{N+1}}{r_i-p_{N+1}}\left(\frac{p_{N+1}}{r_i}\right)^ne_{q^\alpha}^{-p_{N+1} x}e_{q^\beta}^{\frac{1}{p_{N+1}}y}E_{q^\alpha}^{r_i x}E_{q^\beta}^{-\frac{1}{r_i}y}\\
			b_j(p_j)^{n+1}e_{q^\alpha}^{-p_j x}e_{q^\beta}^{\frac{1}{p_j}y}&b_{N+1}(p_{N+1})^{n+1}e_{q^\alpha}^{-p_{N+1} x}e_{q^\beta}^{\frac{1}{p_{N+1}}y}
	\end{vmatrix}}{\begin{vmatrix}
			\delta_{i,j}+\frac{a_ib_jp_j}{r_i-p_j}\left(\frac{p_j}{r_i}\right)^{n-1}e_{q^\alpha}^{-p_j x}e_{q^\beta}^{\frac{1}{p_j}y}E_{q^\alpha}^{r_i x}E_{q^\beta}^{-\frac{1}{r_i}y}&\delta_{i,N+1}+\frac{a_ib_{N+1}p_{N+1}}{r_i-p_{N+1}}\left(\frac{p_{N+1}}{r_i}\right)^{n-1}e_{q^\alpha}^{-p_{N+1} x}e_{q^\beta}^{\frac{1}{p_{N+1}}y}E_{q^\alpha}^{r_i x}E_{q^\beta}^{-\frac{1}{r_i}y}\\
			b_j(p_j)^{n}e_{q^\alpha}^{-p_j x}e_{q^\beta}^{\frac{1}{p_j}y}&b_{N+1}(p_{N+1})^{n}e_{q^\alpha}^{-p_{N+1} x}e_{q^\beta}^{\frac{1}{p_{N+1}}y}
	\end{vmatrix}}\\
%	&=\frac{\begin{vmatrix}
%			(\frac{p_j}{r_i})^n\left(\delta_{i,j}(\frac{r_i}{p_j})^n+\frac{a_ib_jp_j}{r_i-p_j}e_{q^\alpha}^{-p_j x}e_{q^\beta}^{\frac{1}{p_j}y}E_{q^\alpha}^{r_i x}E_{q^\beta}^{-\frac{1}{r_i}y}\right)&\left(\frac{p_{N+1}}{r_i}\right)^n\left(\delta_{i,N+1}\left(\frac{r_i}{p_{N+1}}\right)^n+\frac{a_ib_{N+1}p_{N+1}}{r_i-p_{N+1}}e_{q^\alpha}^{-p_{N+1} x}e_{q^\beta}^{\frac{1}{p_{N+1}}y}E_{q^\alpha}^{r_i x}E_{q^\beta}^{-\frac{1}{r_i}y}\right)\\
%			b_j(p_j)^{n+1}e_{q^\alpha}^{-p_j x}e_{q^\beta}^{\frac{1}{p_j}y}&b_{N+1}(p_{N+1})^{n+1}e_{q^\alpha}^{-p_{N+1} x}e_{q^\beta}^{\frac{1}{p_{N+1}}y}
%	\end{vmatrix}}{\begin{vmatrix}
%			\left(\frac{p_j}{r_i}\right)^{n}\left(\delta_{i,j}\left(\frac{r_i}{p_j}\right)^{n}+\frac{a_ib_jr_i}{r_i-p_j}e_{q^\alpha}^{-p_j x}e_{q^\beta}^{\frac{1}{p_j}y}E_{q^\alpha}^{r_i x}E_{q^\beta}^{-\frac{1}{r_i}y}\right)&\left(\frac{p_{N+1}}{r_i}\right)^{n}\left(\delta_{i,N+1}\left(\frac{r_i}{p_{N+1}}\right)^{n}+\frac{a_ib_{N+1}r_i}{r_i-p_{N+1}}e_{q^\alpha}^{-p_{N+1} x}e_{q^\beta}^{\frac{1}{p_{N+1}}y}E_{q^\alpha}^{r_i x}E_{q^\beta}^{-\frac{1}{r_i}y}\right)\\
%			b_j(p_j)^{n}e_{q^\alpha}^{-p_j x}e_{q^\beta}^{\frac{1}{p_j}y}&b_{N+1}(p_{N+1})^{n}e_{q^\alpha}^{-p_{N+1} x}e_{q^\beta}^{\frac{1}{p_{N+1}}y}
%	\end{vmatrix}}\\
		&=\frac{\begin{vmatrix}
			\delta_{i,j}(\frac{r_i}{p_j})^n+\frac{a_ib_jp_j}{r_i-p_j}e_{q^\alpha}^{-p_j x}e_{q^\beta}^{\frac{1}{p_j}y}E_{q^\alpha}^{r_i x}E_{q^\beta}^{-\frac{1}{r_i}y}&\delta_{i,N+1}\left(\frac{r_i}{p_{N+1}}\right)^n+\frac{a_ib_{N+1}p_{N+1}}{r_i-p_{N+1}}e_{q^\alpha}^{-p_{N+1} x}e_{q^\beta}^{\frac{1}{p_{N+1}}y}E_{q^\alpha}^{r_i x}E_{q^\beta}^{-\frac{1}{r_i}y}\\
			b_jp_je_{q^\alpha}^{-p_j x}e_{q^\beta}^{\frac{1}{p_j}y}&b_{N+1}p_{N+1}e_{q^\alpha}^{-p_{N+1} x}e_{q^\beta}^{\frac{1}{p_{N+1}}y}
	\end{vmatrix}}{\begin{vmatrix}
			\delta_{i,j}\left(\frac{r_i}{p_j}\right)^{n}+\frac{a_ib_jr_i}{r_i-p_j}e_{q^\alpha}^{-p_j x}e_{q^\beta}^{\frac{1}{p_j}y}E_{q^\alpha}^{r_i x}E_{q^\beta}^{-\frac{1}{r_i}y}&\delta_{i,N+1}\left(\frac{r_i}{p_{N+1}}\right)^{n}+\frac{a_ib_{N+1}r_i}{r_i-p_{N+1}}e_{q^\alpha}^{-p_{N+1} x}e_{q^\beta}^{\frac{1}{p_{N+1}}y}E_{q^\alpha}^{r_i x}E_{q^\beta}^{-\frac{1}{r_i}y}\\
			b_je_{q^\alpha}^{-p_j x}e_{q^\beta}^{\frac{1}{p_j}y}&b_{N+1}e_{q^\alpha}^{-p_{N+1} x}e_{q^\beta}^{\frac{1}{p_{N+1}}y}
	\end{vmatrix}}.
\end{align*}
To make sure that $X_{n+1}'$ is $2$-periodic, we need to require $r_{N+1}=-p_{N+1}=\lambda_{N+1}$  in addition to $r_i=-p_i=\lambda_i$ for $i=1,\dots,N$. Finally, we obtain 
\begin{align*}
X_{0}'&=\frac{\begin{vmatrix}
			-\delta_{i,j}-\frac{a_ib_j\lambda_j}{\lambda_i+\lambda_j}e_{q^\alpha}^{\lambda_j x}e_{q^\beta}^{-\frac{1}{\lambda_j}y}E_{q^\alpha}^{\lambda_i x}E_{q^\beta}^{-\frac{1}{\lambda_i}y}&-\delta_{i,N+1}-\frac{a_ib_{N+1}\lambda_{N+1}}{\lambda_i+\lambda_{N+1}}e_{q^\alpha}^{\lambda_{N+1} x}e_{q^\beta}^{-\frac{1}{\lambda_{N+1}}y}E_{q^\alpha}^{\lambda_i x}E_{q^\beta}^{-\frac{1}{\lambda_i}y}\\
			b_jp_je_{q^\alpha}^{\lambda_j x}e_{q^\beta}^{-\frac{1}{\lambda_j}y}&-b_{N+1}\lambda_{N+1}e_{q^\alpha}^{\lambda_{N+1} x}e_{q^\beta}^{-\frac{1}{\lambda_{N+1}}y}
	\end{vmatrix}}{\begin{vmatrix}
			-\delta_{i,j}+\frac{a_ib_j\lambda_i}{\lambda_i+\lambda_j}e_{q^\alpha}^{\lambda_j x}e_{q^\beta}^{-\frac{1}{\lambda_j}y}E_{q^\alpha}^{\lambda_i x}E_{q^\beta}^{-\frac{1}{\lambda_i}y}&-\delta_{i,N+1}+\frac{a_ib_{N+1}\lambda_i}{\lambda_i+\lambda_{N+1}}e_{q^\alpha}^{\lambda_{N+1} x}e_{q^\beta}^{-\frac{1}{\lambda_{N+1}}y}E_{q^\alpha}^{\lambda_i x}E_{q^\beta}^{-\frac{1}{\lambda_i}y}\\
			b_je_{q^\alpha}^{\lambda_j x}e_{q^\beta}^{-\frac{1}{\lambda_j}y}&b_{N+1}e_{q^\alpha}^{\lambda_{N+1} x}e_{q^\beta}^{-\frac{1}{\lambda_{N+1}}y}
	\end{vmatrix}}\\
	X_{1}'&=\frac{\begin{vmatrix}
			\delta_{i,j}-\frac{a_ib_j\lambda_j}{\lambda_i+\lambda_j}e_{q^\alpha}^{\lambda_j x}e_{q^\beta}^{-\frac{1}{\lambda_j}y}E_{q^\alpha}^{\lambda_i x}E_{q^\beta}^{-\frac{1}{\lambda_i}y}&\delta_{i,N+1}-\frac{a_ib_{N+1}\lambda_{N+1}}{\lambda_i+\lambda_{N+1}}e_{q^\alpha}^{\lambda_{N+1} x}e_{q^\beta}^{-\frac{1}{\lambda_{N+1}}y}E_{q^\alpha}^{\lambda_i x}E_{q^\beta}^{-\frac{1}{\lambda_i}y}\\
			b_jp_je_{q^\alpha}^{\lambda_j x}e_{q^\beta}^{-\frac{1}{\lambda_j}y}&-b_{N+1}\lambda_{N+1}e_{q^\alpha}^{\lambda_{N+1} x}e_{q^\beta}^{-\frac{1}{\lambda_{N+1}}y}
	\end{vmatrix}}{\begin{vmatrix}
			\delta_{i,j}+\frac{a_ib_j\lambda_i}{\lambda_i+\lambda_j}e_{q^\alpha}^{\lambda_j x}e_{q^\beta}^{-\frac{1}{\lambda_j}y}E_{q^\alpha}^{\lambda_i x}E_{q^\beta}^{-\frac{1}{\lambda_i}y}&\delta_{i,N+1}+\frac{a_ib_{N+1}\lambda_i}{\lambda_i+\lambda_{N+1}}e_{q^\alpha}^{\lambda_{N+1} x}e_{q^\beta}^{-\frac{1}{\lambda_{N+1}}y}E_{q^\alpha}^{\lambda_i x}E_{q^\beta}^{-\frac{1}{\lambda_i}y}\\
			b_je_{q^\alpha}^{\lambda_j x}e_{q^\beta}^{-\frac{1}{\lambda_j}y}&b_{N+1}e_{q^\alpha}^{\lambda_{N+1} x}e_{q^\beta}^{-\frac{1}{\lambda_{N+1}}y}
	\end{vmatrix}}
\end{align*}
which together with $X_0$ and $X_1$ given by \eqref{oS1} and \eqref{oS2} give solutions to the modified $q$-sG equation \eqref{BTsG3}$\sim$\eqref{BTsG6}. If we further require $a_i$ are real and $b_j$ for $i,j=1,\dots,N,N+1$ are pure imaginary, then we can draw the conclusion that $X'=X_0'=\bar{X_1'}$. Notice $X=X_0=\bar{X_1}$ obtained earlier, we finally get solutions to the reduced $q$-sG equation \eqref{rBT1}$\sim$\eqref{rBT2} . 

\section{Concluding remarks}
In literature, Casortian solutions, bilinear B$\ddot{a}$cklund transformation and Lax pair were presented for the bilinear $q$-2DTL equation by Hirota's bilinear method \cite{KOS,LNS}. Moreover, Darboux transformation was established to construct quasi-Casoratian solutions for a noncommutative $q$-2DTL equation \cite{LNS}. In this paper, we successfully derive a generalized bilinear B$\ddot{a}$cklund transformation for the bilinear $q$-2DTL equation, which reduces to the bilinear B$\ddot{a}$cklund transformation for the well-known bilinear 2DTL equation appearing in \cite{HR1,HR2} by taking the continuum limit $q\rightarrow 1$. Then a generalized Lax pair is derived along the line. As a matter of fact, the existing Darboux transformation for the noncommutative $q$-2DTL equation also works for the commutative $q$-2DTL equation \eqref{cbQT}, by which Casoratian solutions to both the bilinear $q$-2DTL equation and its bilinear B$\ddot{a}$cklund transformation are re-constructed. This reveals the profound relations between Darboux transformation and Hirota's method. As a remaining challenging problem, we successfully construct the binary Darboux transformation for the $q$-2DTL equation \eqref{cbQT}, by which Grammian solutions to the bilinear $q$-2DTL equation and its bilinear B$\ddot{a}$cklund transformation are obtained. What's more, these solutions are not only proved both by induction and direct verifications but also expressed in terms of quantum integrals. As the $2$-periodic reductions of the $q$-2DTL equation \eqref{cbQT} and its solutions, a $q$-difference sine-Gordon equation and its modified system are reported for the first time together with their corresponding solutions. 
As we know, sine-Gordon equation is of great research interest and extensively studied in literature. Therefore, we believe it is interesting to explore other properties and potential applications of the $q$-sG equation and modified $q$-sG equation in future. In addition, it is known that the $3$-periodic reduction of the 2DTL equation yields Tzitzeica equation. The $3$-periodic reductions of the $q$-2DTL equation are expected to produce something interesting.

\section*{Acknowledgement}
The authors would like to show their heartfelt gratitude to Professor Xing-Biao Hu, Professor Qing-Ping Liu and Dr. Kai Tian for their kind guidance and help. This work was supported by
National Natural Science Foundation of China (Grant Nos. 11971322, 12171475 and 11871336).

\bibliography{refs}

\end{document}